\DeclareMathOperator{\Ex}{\mathbb{E}}
\newtheorem{definition}{Definition}
\newtheorem{observation}{Observation}
\newtheorem{lemma}{Lemma}
\newtheorem{remark}{Remark}
\renewcommand{\vec}[1]{\mathbf{#1}}
\newtheorem{mylemma}[lemma]{Lemma}
\newtheorem{mydefinition}[definition]{Definition}
\crefname{figure}{Figure}{Figure}
\definecolor{mygreen}{rgb}{0.0, 0.55, 0.0}
\definecolor{blue-violet}{rgb}{0.54, 0.17, 0.89}
\newcommand{\eps}{\varepsilon}
\renewcommand{\R}{\mathbb{R}}
\newcommand{\cO}{{O}}
\newcommand{\cH}{\mathcal{H}}
\renewcommand{\cP}{\mathcal{P}}
\newcommand{\cX}{\mathcal{X}}
\newcommand{\cW}{\mathcal{W}}
\newcommand{\cY}{\mathcal{Y}}
\newcommand{\cZ}{\mathcal{Z}}
\newcommand{\X}{\vec{x}}
\newcommand{\Y}{\vec{y}}
\newcommand{\explain}[1]{\tag{\textcolor{gray}{#1}}}
\title{Spectral Lower Bounds for Local Search\footnote{Alphabetical author ordering.  Research  supported  by the US National Science Foundation under CAREER grant CCF-2238372.}}
\author{Simina Br\^anzei\footnote{Purdue University. E-mail: simina.branzei@gmail.com.}
	\and 
	Nicholas J. Recker\footnote{Purdue University. E-mail: nrecker@purdue.edu.}
}
\date{}
\begin{document}

	\maketitle

	\begin{abstract}
		
		Local search  is a powerful heuristic in optimization and computer science, the  complexity of which has been studied in the white box and black box models. In the black box model, we are given a graph $G = (V,E)$ and oracle access to a function $f : V \to \mathbb{R}$. The local search problem is to find a vertex $v$ that is a local minimum, i.e. with $f(v) \leq f(u)$ for all $(u,v) \in E$, using as few queries to the oracle as possible.

		We show that if a graph $G$ admits a lazy, irreducible, and reversible Markov chain  with stationary distribution $\pi$, then the randomized query complexity of local search on $G$ is $\Omega\left( \frac{\sqrt{n}}{t_{mix} \cdot \exp(3\sigma)}\right)$,  where  $t_{mix}$ is the  mixing time of the chain and $\sigma = \max_{u,v \in V(G)} \frac{\pi(v)}{\pi(u)}.$ This theorem formally establishes a connection between the  query complexity of local search and the mixing time of the fastest mixing Markov chain for the given graph.
		We also get several corollaries that lower bound the complexity as a function of the spectral gap, one of which slightly improves a prior result from \cite{BCR23}.
	\end{abstract}
	
	\maketitle 
	
	
	\newpage

	\section{Introduction}
	
	Local search stands as a robust heuristic within optimization and computer science, analyzed through both white box and black box frameworks. In the  black box model, we are given  a graph $G=(V,E)$ alongside oracle access to a function $f : V \to R$. The objective is to identify a vertex $v$ that represents a local minimum, meaning $f(v) \leq f(u)$  for every edge $(u,v)$, while minimizing the number of vertices queried.

	Obtaining lower bounds for the complexity of  local search has a rich history of analysis via random walks.
	The first  pioneering work on the subject was  \cite{aldous1983minimization}, which did  careful tailored analysis of the hitting time of random walks  on the Boolean hypercube to obtain lower bounds for local search.
	Another breakthrough was obtained by \cite{Aaronson06}, which designed a combinatorial method of obtaining lower bounds for local search inspired by the relational adversary method from quantum computing. This approach enabled obtaining sharper lower bounds for the Boolean hypercube and $d$-dimensional  grid, and was successfully used in many later works.
	
	In this paper we consider the high level question: How does the  geometry of the graph affect the complexity of local search?
	While the query complexity is comprehensively understood for neighbourhood structures such as the $d$-dimensional grid  and the Boolean hypercube, knowledge remains  limited for more general neighbourhood structures.
	
	Nevertheless, the spatial structure in optimization settings typically extends to more complex graphs. For instance, in scenarios such as low rank matrix estimation with data compromised by adversarial attacks,  the function is defined on a Riemannian manifold rather than a  traditional Euclidean space~\cite{bonnabel2013stochastic}; thus the discretization of an optimization search space may not necessarily always correspond to some $d$-dimensional grid.  For a more extensive survey on stochastic gradient descent on Riemannian  manifolds, see, e.g., \cite{amari_book}. This motivates studying  local search not only on hypercubes and grids, but also on  broader classes of graphs.

	Inspired by the observation that many lower bounds for local search are based on various types of random walks, we consider general random walks for the graph at hand and obtain lower bounds as a function of their mixing time. Our analysis uses a variant of the classical relational adversary method from \cite{BCR23} and our  main result is generic in two ways: the graph is  arbitrary and the random walk evolves according to a Markov chain, which we only require to be lazy, irreducible, and reversible. This allows us to formally connect the query complexity of local search and the mixing time of the fastest mixing Markov chain for the given graph, which is a  classical problem analyzed starting with \cite{boyd2004fastest}, with recent results in \cite{olesker2021geometric}.
	As a corollary, we also get a lower bound  in terms of the spectral gap of the transition matrix of the chain.

	

	\section{Our Results} \label{sec:model}
	
	In this section we define the model and state our results.
	
	\subsection{Model} 
	Let $G = (V,E)$ be a connected undirected  graph and $f : V \to \mathbb{R}$ a function defined on the vertices.  
	A vertex $v \in V$  is a local minimum if $f(v) \leq f(u)$ for all $\{u,v\} \in E$.
	We will write $V = [n] = \{1, \ldots, n\}$. 
	
	Given as input a graph $G$ and oracle access to function $f$, the local search problem is to find a local minimum of $f$ on $G$ using as few queries as possible. {Each query is of the form: ``Given a vertex $v$, what is $f(v)$?''}.

	\paragraph{Query complexity.} The \emph{deterministic query complexity} of a task is the total number of queries necessary and sufficient for a correct deterministic algorithm to return a solution.
	The \emph{randomized query complexity} is the expected number of queries required to return a solution with probability at least $9/10$ for each input, where the expectation is taken over the coin tosses of the protocol. 
	
	\paragraph{Degree and distance.} Let $d_{max}$ and $d_{min}$ be the maximum and minimum degree of any vertex in $G$ respectively.
	Let $d(v)$ be the degree of $v$ for all $v \in V$.
	For each $u,v \in V$, let $dist(u,v)$ be the length of the shortest path from $u$ to $v$. 
	\paragraph{Markov chain.}
	We consider a discrete-time Markov chain on $G$ with transition matrix $\cP$, meaning that the state space is $V$ and $\cP_{u,v} = \cP_{v,u} = 0 $ whenever $(u,v) \not \in E(G) \bigcup_{u \in V} \{\{u,u\}\}$.
	Suppose  the chain has stationary distribution $\pi$.
	The chain is:
	\begin{description}
		\item[$\;\;\bullet$] \emph{lazy}: if  $\cP_{u,u} \geq 1/2$ for all $u \in V$.
		\item[$\;\;\bullet$]  \emph{irreducible}:  if all states can be reached
		from any starting point. Formally, for any two states $x, y \in V$ there exists
		an integer $t$ (possibly depending on $x$ and $y$) such that $(\cP^t)_{x, y} > 0$, where  $\cP^t$  is the $t$-th power of the matrix $\cP$.
		\item[$\;\;\bullet$]  \emph{reversible}: if  $\pi(u) \cP_{u,v} = \pi(v) \cP_{v,u}$  for all $u,v \in V$. \footnote{For a formal definition of reversibility, see  \cite{levin2017markov} equation 3.26.}
	\end{description}

	For each $\epsilon > 0$, the \emph{mixing time} $t_{mix}(\epsilon)$ of the Markov chain\footnote{For a formal definition of mixing time, see e.g. \cite{levin2017markov} equation 4.30. The definition in \cite{levin2017markov} equation 4.30 is based on the TV distance, but is  equivalent to the one here by  Proposition 4.2 in \cite{levin2017markov}.} with transition matrix $\cP$ is:
	\begin{align}
		t_{mix}(\epsilon) = \min \left\{t \in \mathbb{N} \; \Bigl\lvert \; \forall u \in V:\;\; \frac{1}{2} \sum_{v \in V} |(\cP^t)_{u,v}-\pi(v)| \le \epsilon \right\} \,.
	\end{align}
	

	\subsection{Our Contributions}
	
	Our main contribution is the following theorem. 
	\begin{restatable}{mytheorem}{lowmixingtimeimplieslocalsearchhard} \label{thm:lower_bound_in_terms_of_mixing_time}
		Let $G = (V,E)$ be a connected undirected graph on $n$ vertices.
		Consider a discrete-time, lazy, irreducible, and  reversible  Markov chain on $G$ with transition matrix $\cP$ and stationary distribution $\pi$.
		%
		Then the  randomized query complexity of local search on $G$ is  
		\[        
		\Omega\left( \frac{\sqrt{n}}{t_{mix}\left(\frac{\sigma}{2n}\right)    \cdot \exp(3\sigma)}\right), \qquad \; \mbox{where} \; \;  \sigma = \max_{u,v \in V} \frac{\pi(v)}{ \pi(u)}\,.
		\]
	\end{restatable}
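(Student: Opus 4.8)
The plan is to use the relational adversary method in the style of \cite{BCR23}, where the hard distribution over functions is built from random walks on $G$ generated by the given Markov chain $\cP$. Concretely, I would define a distribution over inputs by sampling a "staircase" function: pick a starting vertex (say according to $\pi$, or uniformly — the choice will be calibrated so the adversary weights come out symmetric), run the lazy chain $\cP$ for roughly $L$ steps to obtain a path $\gamma = (v_0, v_1, \dots, v_L)$, and define $f_\gamma$ so that it strictly decreases along the path (e.g. $f_\gamma(v_i) = -i$ along the path, with all off-path vertices assigned large values via their distance to the path, broken consistently so the only local minimum is near the endpoint $v_L$). The length $L$ should be chosen to be $\Theta(\sqrt{n})$-ish but ultimately tuned against $t_{mix}$; the point is that finding the local minimum essentially forces the algorithm to discover the hidden endpoint of a walk it cannot shortcut.

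The core of the argument is to set up the relational adversary weights $w(f_\gamma, f_{\gamma'})$ between pairs of paths and to lower bound, for every vertex $v$, the ratio
\[
\frac{\sum_{\gamma' : f_\gamma(v) \neq f_{\gamma'}(v)} w(f_\gamma, f_{\gamma'})}{\sum_{\gamma'} w(f_\gamma, f_{\gamma'})}
\]
by something like $\Omega\!\left(\tfrac{1}{\sqrt{n}}\cdot \tfrac{1}{\text{stuff}}\right)$; the relational adversary bound then gives query complexity $\Omega$ of the reciprocal. The natural choice is to let $w$ be proportional to the probability that the two walks $\gamma, \gamma'$ agree on a long common prefix and then split — i.e. couple two independent walks from the same start. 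Here is where reversibility and laziness enter: reversibility lets me run the walk "backwards" from the endpoint with the same transition probabilities up to the $\pi$-reweighting, so that conditioning on passing through a given vertex $v$ at a given time is tractable; laziness ensures $\cP^t$ is genuinely a smoothing operator (no periodicity issues) so that $(\cP^t)_{u,v}$ is comparable to $\pi(v)$ once $t \gtrsim t_{mix}$. The parameter $\sigma = \max_{u,v}\pi(v)/\pi(u)$ controls how far the stationary distribution is from uniform, and the $\exp(3\sigma)$ factor should emerge from bounding ratios of the form $\pi(v)/\pi(u)$ (and products of a bounded number of them) that appear when one compares the weight of a walk to the weight of a perturbed walk — essentially a change-of-measure cost paid a constant number of times, each costing a factor $\sigma$, hence $\exp(O(\sigma))$ after being careful with the constant.

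The main obstacle, and the step I expect to be most delicate, is controlling the "collision" structure: I need that two independent walks of length $L \approx \sqrt{n}\,/\,t_{mix}$-ish from the same (or $\pi$-distributed) start do not revisit vertices too often and, crucially, that after they split they diverge quickly enough that a single query to any vertex $v$ is consistent with only a small fraction (weighted by $w$) of the surviving paths. Bounding the revisit probability is exactly where $t_{mix}(\sigma/2n)$ shows up: after $t_{mix}(\sigma/2n)$ steps the walk's location is within $\sigma/2n$ in TV of $\pi$, so the chance of hitting any particular vertex in the next block of steps is at most roughly $(\max_v \pi(v)) + \sigma/2n \le \sigma/n + \sigma/2n = O(\sigma/n)$ (using $\max_v \pi(v) \le \sigma/n$ since $\pi$ sums to $1$ and has ratio at most $\sigma$), so over $\Theta(\sqrt n)$ such blocks the expected number of collisions with a fixed vertex is $O(\sigma/\sqrt n)$, which is the quantitative heart of the $\sqrt n / t_{mix}$ bound. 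I would therefore: (1) fix the path distribution and the function construction, verifying the unique-local-minimum property; (2) define $w$ via coupled walks and verify the relational adversary hypotheses; (3) prove the mixing-based collision/divergence lemma, which is where reversibility + the choice $\epsilon = \sigma/2n$ are used; (4) assemble the ratio bound, tracking the $\exp(3\sigma)$ change-of-measure losses; (5) invoke the relational adversary theorem to conclude. Steps (3) and (4) are the substance; the rest is bookkeeping adapted from \cite{BCR23}.
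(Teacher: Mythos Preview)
Your high-level plan matches the paper's---relational adversary \`a la \cite{BCR23}, staircase functions indexed by walks of $\cP$, mixing-time control of hit probabilities---but two concrete pieces are wrong or missing, and without them the argument does not close.

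First, the walk length: you write $L\approx\sqrt{n}/t_{mix}$ (and elsewhere $\Theta(\sqrt n)$), but the correct choice is $L=\lfloor\sqrt n\rfloor\cdot T$ with $T=t_{mix}(\sigma/2n)$. The walk is organised into $\lfloor\sqrt n\rfloor$ \emph{blocks} of length $T$, and the relation only allows two walks to diverge at a block endpoint (``milestone''), with weight $\cP[\vec x]\cP[\vec y]/\cP[\text{shared head}]$. This block structure is not cosmetic: it drives the two-region decomposition of the tail that you are missing. After the divergence milestone $x_{jT}$, the tail of $\vec y$ splits into (i) the first block of $T$ steps and (ii) the rest. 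Region (ii) is handled exactly as you say (mixed, so each step hits $v$ with probability $\lesssim\sigma/n$). Region (i) is \emph{not} yet mixed; control comes instead from summing over the random divergence index $j$: for a ``good'' $\vec x$ the milestones $x_0,x_T,\dots$ are distinct, and reversibility gives $\sum_j \mbox{P}_{visit}(x_{jT},v,T)\le\sum_j(\pi(v)/\pi(x_{jT}))\,\mbox{E}_{visit}(v,x_{jT},T)\le\sigma\cdot T$, since a length-$T$ walk from $v$ visits the milestone set at most $T$ times. This is the only use of reversibility, and it is not the ``run the walk backwards from the endpoint'' picture you sketch.

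Second, the source of $\exp(3\sigma)$ is not a change-of-measure paid a constant number of times. It appears when lower-bounding $M$: one needs that a fresh walk from $x_{jT}$ is \emph{good} (no repeated milestones) and really diverges at index $j$. Each of the $\le\lfloor\sqrt n\rfloor$ subsequent milestones must avoid a set of size $\le\sqrt n+1$, and by mixing each does so with probability $\ge 1-2\sigma/\sqrt n$; the product $(1-2\sigma/\sqrt n)^{\sqrt n}\ge 2^{-4\sigma}$ is where the exponential in $\sigma$ comes from. Your step (4) as described would not produce this factor.
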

	
	The best lower bound given by \cref{thm:lower_bound_in_terms_of_mixing_time} is  attained by considering the Markov chain with the fastest possible mixing time for $G$; see \cite{boyd2004fastest} for a classical reference on this problem.
	
	Indeed, for many classes of graphs there can be significant gaps between the mixing time of the fastest mixing Markov chain and that of more obvious choices of Markov chains (such as the max-degree walk or the Metropolis-Hastings chain), including  the barbell graph, edge-transitive graphs, and distance transitive graphs~\cite{BDPX09}. 
	For example, when the stationary distribution is set to uniform on the barbell graph\footnote{The barbell graph consists of two cliques of $n/2$ vertices each, connected by a single edge.}, the max-degree random walk has mixing time $\Theta(n^3)$ while the fastest mixing walk mixes in only $\Theta(n^2)$ steps (see corollaries $5.2$ and $5.3$ in \cite{boyd2005symmetry}).

    Do not be alarmed by the exponential dependence on $\sigma$. There are natural choices for $\cP$ for which $\sigma = 1$, such as the max-degree walk. Writing \cref{thm:lower_bound_in_terms_of_mixing_time} in this way, rather than enforcing $\sigma = 1$, gives a little flexibility for small (ideally constant) $\sigma$.
	
	
	\begin{remark}
		Since  $\sigma \geq 1$, we always have  $t_{mix}(\sigma/(2n)) \le t_{mix}(1/(2n))$. Thus  \cref{thm:lower_bound_in_terms_of_mixing_time}  implies the randomized query complexity is
		\[ \Omega\left( \frac{\sqrt{n}}{t_{mix}\left(\frac{1}{2n}\right)  \exp(3\sigma)}\right)\,.
		\]
	\end{remark}

	
	For lazy chains, the second eigenvalue is always non-negative, so Theorem~\ref{thm:lower_bound_in_terms_of_mixing_time} implies a lower bound based on the spectral gap.
	This leads to the following corollary:
	
	\begin{restatable}{mycorollary}{highspectralgapimplieslocalsearchhard} \label{cor:local_search_and_spectral_gap}
		%
		Let $G = (V,E)$ be a connected undirected graph on $n$ vertices.
		Consider a discrete-time, lazy, irreducible, and  reversible  Markov chain on $G$ with transition matrix $\cP$ and stationary distribution $\pi$. 
		The  randomized query complexity of local search on $G$ is 
		\[
		\Omega\left( \frac{(1-\lambda_2) \sqrt{n}}{\log(n) \exp(3\sigma)} \right),
		\]
		where $\lambda_2$ is the second eigenvalue of $\cP$ and $\sigma = \max_{u,v \in V} {\pi(v)}/{ \pi(u)}$.
	\end{restatable}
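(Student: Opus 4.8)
The plan is to derive the corollary directly from \cref{thm:lower_bound_in_terms_of_mixing_time} by bounding the mixing time $t_{mix}(\sigma/(2n))$ in terms of the spectral gap $1-\lambda_2$. First I would recall the standard fact that for a lazy, irreducible, reversible chain the transition matrix $\cP$ has only non-negative real eigenvalues $1 = \lambda_1 \ge \lambda_2 \ge \cdots \ge \lambda_n \ge 0$; laziness is exactly what kills the possibility of a negative eigenvalue (since $\cP = \tfrac12 I + \tfrac12 Q$ for a stochastic $Q$, so each eigenvalue is at least $\tfrac12(-1)+\tfrac12 = 0$). Hence the absolute spectral gap equals $1-\lambda_2$, and the relaxation time is $t_{rel} = 1/(1-\lambda_2)$.

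Next I would invoke the classical spectral bound on mixing time (e.g. \cite{levin2017markov}, Theorem 12.4): for a reversible chain,
\[
t_{mix}(\epsilon) \;\le\; t_{rel}\,\log\!\left(\frac{1}{\epsilon\,\pi_{min}}\right) \;=\; \frac{1}{1-\lambda_2}\,\log\!\left(\frac{1}{\epsilon\,\pi_{min}}\right),
\]
where $\pi_{min} = \min_{v} \pi(v)$. I would then bound $\pi_{min}$ from below using $\sigma$: since $\pi$ is a probability distribution on $n$ vertices, $\pi_{max} \le 1$ is false in general, but $\max_v \pi(v) \le \sigma \cdot \pi_{min}$ and $\sum_v \pi(v) = 1$ give $1 \le n\,\pi_{max} \le n\,\sigma\,\pi_{min}$, hence $\pi_{min} \ge 1/(n\sigma)$. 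Plugging $\epsilon = \sigma/(2n)$, we get $\tfrac{1}{\epsilon \pi_{min}} \le \tfrac{2n}{\sigma}\cdot n\sigma = 2n^2$, so
\[
t_{mix}\!\left(\frac{\sigma}{2n}\right) \;\le\; \frac{\log(2n^2)}{1-\lambda_2} \;=\; O\!\left(\frac{\log n}{1-\lambda_2}\right).
\]

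Finally I would substitute this upper bound into the denominator of \cref{thm:lower_bound_in_terms_of_mixing_time}: the randomized query complexity is $\Omega\!\big(\sqrt{n}/(t_{mix}(\sigma/(2n))\exp(3\sigma))\big) = \Omega\!\big((1-\lambda_2)\sqrt{n}/(\log(n)\exp(3\sigma))\big)$, which is exactly the claimed bound. There is no real obstacle here — the corollary is a mechanical consequence of the main theorem once the spectral mixing-time bound and the estimate $\pi_{min} \ge 1/(n\sigma)$ are in place; the only points requiring a word of care are (i) justifying $\lambda_2 \ge 0$ so that $1-\lambda_2$ is genuinely the absolute spectral gap controlling mixing, and (ii) checking the logarithm is $\Theta(\log n)$ rather than something larger, which follows since $\log(2n^2) = 2\log n + \log 2$. (Edge cases such as $n$ small or $1-\lambda_2$ tiny are absorbed into the asymptotic notation.)
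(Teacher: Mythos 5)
Your proof is correct and follows the same route as the paper: substitute the spectral bound $t_{mix}(\epsilon) \le (1-\lambda_2)^{-1}\log(1/(\epsilon\,\pi_{min}))$ into \cref{thm:lower_bound_in_terms_of_mixing_time}, then use $\sigma\,\pi_{min} = \pi_{max} \ge 1/n$ (equivalently $\pi_{min} \ge 1/(n\sigma)$) to reduce the logarithm to $O(\log n)$. The only stray remark is the parenthetical ``$\pi_{max}\le 1$ is false in general,'' which is not what you mean (it is always true); the argument you actually run is the correct one.
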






	
	If we constrain the ratio $d_{max}/d_{min}$ and focus on the simple lazy random walk, we can remove the dependency on $\sigma$ in \cref{cor:local_search_and_spectral_gap}.
	
	\begin{restatable}{mycorollary}{highspectralgapimplieslocalsearchhardboundedsigma} \label{cor:local_search_and_spectral_gap_bounded_sigma}
		Let $G = (V,E)$ be a connected undirected graph on $n$ vertices.
		If $d_{max}/d_{min} \leq C$ for some constant $C > 0$, then the randomized query complexity of local search on $G$ is
		\[ 
		\Omega\left( \frac{(1-\lambda_2) \sqrt{n}}{\log{n}} \right),
		\]
		where  $\lambda_2$ is the second eigenvalue of the transition matrix of the simple lazy random walk on $G$.
	\end{restatable}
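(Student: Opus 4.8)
The plan is to obtain this as a direct specialization of \cref{cor:local_search_and_spectral_gap}, applied to the simple lazy random walk on $G$, using the hypothesis $d_{max}/d_{min} \le C$ to control the quantity $\sigma$.

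First I would fix the chain: let $\cP = \tfrac12 I + \tfrac12 Q$, where $Q$ is the transition matrix of the simple random walk on $G$, i.e. $Q_{u,v} = 1/d(u)$ for $\{u,v\} \in E$ and $Q_{u,v} = 0$ otherwise. I would then verify the three hypotheses of \cref{cor:local_search_and_spectral_gap}. Laziness is immediate since $\cP_{u,u} \ge 1/2$ by construction. Irreducibility follows from $G$ being connected. Reversibility holds with respect to $\pi(v) = d(v)/\sum_{w \in V} d(w)$: the simple random walk already satisfies detailed balance, since $d(u)\,Q_{u,v} = d(v)\,Q_{v,u}$ (both sides are $1$ when $\{u,v\}\in E$ and $0$ otherwise), and mixing in the lazy self-loop preserves this, as $\pi(u)\cP_{u,v} = \tfrac12\pi(u)Q_{u,v} = \tfrac12\pi(v)Q_{v,u} = \pi(v)\cP_{v,u}$ for $u \ne v$, and the $u = v$ case is trivial. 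Hence \cref{cor:local_search_and_spectral_gap} applies to $\cP$, and $\lambda_2$ in its conclusion is exactly the second eigenvalue of the lazy transition matrix named in the present statement.

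Next I would compute $\sigma$ for this chain. Since $\pi(v) \propto d(v)$, we have $\sigma = \max_{u,v\in V}\pi(v)/\pi(u) = \max_{u,v\in V} d(v)/d(u) = d_{max}/d_{min} \le C$. Substituting into \cref{cor:local_search_and_spectral_gap} yields a randomized query lower bound of $\Omega\!\left(\frac{(1-\lambda_2)\sqrt{n}}{\log(n)\exp(3\sigma)}\right)$, and since $\sigma \le C$ with $C$ a fixed constant we have $\exp(3\sigma) \le \exp(3C) = O(1)$. Absorbing this factor into the hidden constant of the $\Omega(\cdot)$ gives the claimed bound $\Omega\!\left(\frac{(1-\lambda_2)\sqrt{n}}{\log n}\right)$.

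There is no substantive obstacle: the entire content lives in \cref{thm:lower_bound_in_terms_of_mixing_time} and \cref{cor:local_search_and_spectral_gap}, and this corollary is merely the clean restatement for the canonical walk on a near-regular graph. The only points requiring care are bookkeeping ones — noting that the constant suppressed by $\Omega(\cdot)$ now depends on $C$ through $e^{3C}$, which is harmless as $C$ is treated as an absolute constant, and being explicit that $\lambda_2$ refers to the second eigenvalue of the \emph{lazy} matrix $\cP$ rather than of $Q$.
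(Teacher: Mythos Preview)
Your proposal is correct and follows essentially the same approach as the paper: apply \cref{cor:local_search_and_spectral_gap} to the simple lazy random walk, observe that its stationary distribution $\pi(v)\propto d(v)$ gives $\sigma=d_{max}/d_{min}\le C$, and absorb the resulting constant $\exp(3C)$ into the $\Omega(\cdot)$. You include more detail than the paper does in verifying laziness, irreducibility, and reversibility, but the argument is the same.
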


	The lower bound in \cref{cor:local_search_and_spectral_gap_bounded_sigma} improves by a $\log{n}$ factor the lower bound attainable from \cite{BCR23} for such graphs. For comparison, we state the lower bound from  \cite{BCR23} next.
	
	\begin{restatable}{myproposition}{bcrcor}\cite{BCR23}\label{cor:local_search_and_spectral_gap_sharp_power_law}
		Let $G = (V,E)$ be a connected undirected graph on $n$ vertices.
		If $d_{max}/d_{min} \le C$ for some constant $C > 0$, then the randomized query complexity of local search on $G$ is  
		\[ 
		\Omega\left( \frac{(1 - \lambda_2) \sqrt{n}}{\log^2(n)} \right),
		\]
		where $\lambda_2$ is the second eigenvalue of
		the transition matrix of the simple lazy random walk on $G$.
	\end{restatable}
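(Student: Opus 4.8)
Because \cref{cor:local_search_and_spectral_gap_sharp_power_law} is precisely the bound from \cite{BCR23} that \cref{cor:local_search_and_spectral_gap_bounded_sigma} improves upon, the plan is to derive it along the same two-step route used for our own corollaries, while flagging where the extra $\log n$ factor enters. Step one is a purely combinatorial, random-walk-based lower bound for local search on an arbitrary graph $G$, which is essentially what \cite{BCR23} establishes. Following the relational adversary framework, I would take the hard input distribution to be generated by random decreasing walks: run the simple lazy random walk on $G$ for enough steps that its endpoint is nearly uniformly spread over $V$, make the trajectory (near-)self-avoiding, and plant a function $f$ that strictly decreases along the trajectory and is large off it, so that the endpoint is the unique local minimum. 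Two inputs whose trajectories share a long common prefix and then diverge are hard to distinguish with few queries, and the relevant adversary weights are governed by collision/return probabilities of the walk, which mixing controls. Carrying this through yields a bound of the shape $\Omega\left(\sqrt{n}/(t_{mix}\cdot\log n)\right)$ for the simple lazy random walk; this is the statement that \cref{thm:lower_bound_in_terms_of_mixing_time} sharpens by removing one $\log n$ (at the price of the harmless $\exp(3\sigma)$ factor).

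Step two is the spectral conversion, identical to the one behind \cref{cor:local_search_and_spectral_gap,cor:local_search_and_spectral_gap_bounded_sigma}. For the simple lazy random walk, reversibility holds with $\pi(v)=d(v)/(2|E|)$, and $d_{max}/d_{min}\le C$ forces $\pi_{min}=\Omega(1/n)$ and $\sigma=d_{max}/d_{min}=O(1)$. Laziness places all eigenvalues of $\cP$ in $[0,1]$, so the standard reversible-chain estimate
\[
t_{mix}(\epsilon)\ \le\ \frac{1}{1-\lambda_2}\,\log\frac{1}{\epsilon\,\pi_{min}}
\]
(see, e.g., \cite{levin2017markov}), applied with $\epsilon$ polynomially small in $n$, gives $t_{mix}=O\left(\log n/(1-\lambda_2)\right)$. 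Substituting this into the step-one bound turns $1/t_{mix}$ into $(1-\lambda_2)/\log n$, and combined with the $\log n$ already present there this produces the claimed $\Omega\left((1-\lambda_2)\sqrt n/\log^2 n\right)$.

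The \emph{main obstacle} is step one --- making the random-walk instance provably hard. Concretely one must (i) show that a lazy-walk trajectory of the chosen length is self-avoiding with constant probability, or can be pruned to a self-avoiding one of comparable length; (ii) argue that, conditioned on the transcript of any $q$ queries, the posterior over possible trajectory endpoints stays spread out over many vertices, using mixing to bound how much a single query can reveal; and (iii) turn this into a lower bound on $q$ via the relational adversary inequality while controlling the ratios between the weights of different inputs. The delicate point is (iii): a naive bookkeeping of these ratios loses a logarithmic factor, and \cref{thm:lower_bound_in_terms_of_mixing_time} is exactly a more careful execution of this accounting, which is why \cref{cor:local_search_and_spectral_gap_bounded_sigma} beats \cref{cor:local_search_and_spectral_gap_sharp_power_law} by a $\log n$ factor --- with $\sigma=O(1)$ here ensuring that the $\exp(3\sigma)$ overhead disappears into the constant.
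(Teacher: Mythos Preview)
Your route is genuinely different from the paper's, and your step one is not actually established anywhere.

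The paper does \emph{not} derive \cref{cor:local_search_and_spectral_gap_sharp_power_law} via mixing time at all. Instead it invokes the \cite{BCR23} lower bound directly in its native form, which is an \emph{expansion/congestion} bound (\cref{lem:local_search_and_expansion}): the randomized query complexity is $\Omega\bigl(\beta\sqrt{n}/(d_{max}\log^2 n)\bigr)$, where $\beta$ is the edge expansion. It then relates $\beta/d_{max}$ to the spectral gap through the bottleneck ratio: for the simple lazy random walk one computes $\Phi_\star \le C^2\,\beta/d_{max}$ (this is where the hypothesis $d_{max}/d_{min}\le C$ is used, to switch between $\pi(S)\le 1/2$ and $|S|\le n/2$), and Cheeger (\cref{lem:mixing_time_and_bottleneck_ratio}) gives $1-\lambda_2\le 2\Phi_\star$. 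Substituting $\beta/d_{max}\ge (1-\lambda_2)/(2C^2)$ into the expansion bound yields the proposition. No mixing-time estimate is used at any point.

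Your proposal instead asserts an intermediate bound $\Omega\bigl(\sqrt{n}/(t_{mix}\cdot\log n)\bigr)$ and then plugs in $t_{mix}=O(\log n/(1-\lambda_2))$. The second step is fine, but the first step is the gap: \cite{BCR23} does not prove a mixing-time bound of that shape --- their construction is based on low-congestion all-pairs paths, and the $\log^2 n$ already appears in the expansion bound, with the extra logarithm coming from rounding a fractional multicommodity flow to an integral one (as the paper notes explicitly after \cref{cor:local_search_and_spectral_gap_sharp_power_law}). So your ``naive bookkeeping loses a log'' narrative misidentifies the source of the loss, and you would have to prove your step-one bound from scratch; you have not sketched how. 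In particular, the delicate point you flag as (iii) is exactly what \cref{thm:lower_bound_in_terms_of_mixing_time} handles \emph{without} losing a log, so it is unclear what weaker argument would produce precisely one extra factor of $\log n$ rather than zero or two.
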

	
	The extra factor of $\log(n)$ in \cref{cor:local_search_and_spectral_gap_sharp_power_law} stems from the result used to connect expansion to edge congestion (see \cite{chuzhoy2016routing}, corollary C.2), which rounds  a fractional flow with congestion $n \log(n)$ to an integer flow with congestion $n \log^2(n)$.
	By avoiding expansion altogether, \cref{cor:local_search_and_spectral_gap_bounded_sigma} also avoids this excess factor of $\log(n)$.
	
	
	
	
	\subsection{High Level Approach}
	
	
	
	
	
	
	
	
	The high level approach is as follows. We consider a set of value functions $f : V \to \mathbbm{R}$ induced by walks from a fixed starting vertex. We define $f$ such that the value at any vertex off the walk is the distance to the starting vertex of the walk, and the value at vertices on the walk is decreasing along the walk. This ensures that there is only one local minimum, namely the end of the walk. This type of construction is classical \cite{aldous1983minimization,Aaronson06}. We denote the space of such functions $\cX$.
	
	We then choose a similarity measure $r : \cX \times \cX \to \mathbbm{R}_{\ge 0}$ (also called the ``relation'') between any two functions. Semantically, the relation measures the difficulty of distinguishing two possible input functions from each other; thus it will be useful that  functions induced by walks with a long shared prefix are defined as more related. Given $r$, we invoke the relational adversary variant (\cref{thm:relational_adversary} from \cite{BCR23}), which implicitly defines a distribution over inputs based on $r$, and outputs a  lower bound on the randomized query complexity. 
	We choose $r$ carefully such that the distribution over walks is that of an arbitrary Markov chain.
	
	The innovation of our methodology over \cite{BCR23} lies in the construction; where their construction is based on low-congestion paths, ours takes the more natural and general approach of using arbitrary Markov chains, including the usual lazy random walks.

	\begin{figure}[h]
		\begin{center}
			\includegraphics[scale=1.6]{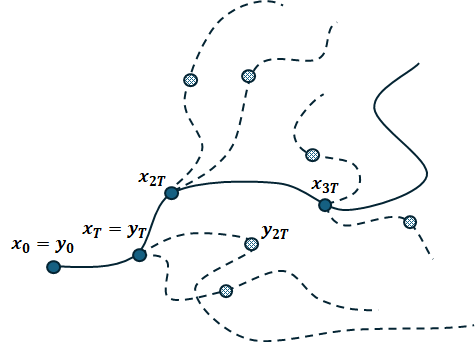}
			\caption{Consider a graph $G$ with a lazy, irreducible, and reversible Markov chain $\cP$ with stationary distribution $\pi$ and mixing time $T$. The proof fixes a walk $\vec{x} = [x_0, \ldots, x_L]$, where  $L = \lfloor \sqrt{n} \rfloor \cdot T$. 
				The walk $\vec{x}$ is illustrated as a solid line, where every $T$-th node is highlighted.
				Sample a random walk $\vec{y} = [y_0, \ldots, y_L]$ according to $\cP$, conditioned on $\vec{y}$ and $\vec{x}$ having a shared prefix of length $jT$, where $j \sim U(0,  \lfloor \sqrt{n} \rfloor )$. 
				We say that $x_{jT} = y_{jT}$ is the ``divergence point''.
				In the figure, the shared prefix of $\vec{x}$ and $\vec{y}$ is $[x_0, \ldots, x_T]$, so $j=1$. 
				A critical step of the proof is to show that no vertex $v$ is too likely to lie on $\vec{y}$ after its divergence  from $\vec{x}$. To show this, we divide the walk $\vec{y}$ in two regions.
				Vertices in the region $R_1 = [y_{jT}, \ldots, y_{jT+1}]$ are \emph{collectively} close to being distributed according to $\pi$.
				This is because the divergence point from $\vec{x}$ is chosen randomly.
				In the region $R_2 = [ y_{jT+1}, ..., y_{L}]$, the walk $\vec{y}$ has mixed, so the vertices in $R_2$ are close to being distributed according to $\pi$. 
				In either case, no vertex $v$ is too likely to lie on $\vec{y}$ after diverging from $\vec{x}$. 
			}
			\label{fig:mixing}
		\end{center}
	\end{figure}
	
	A key step in analyzing the formula given by the relational adversary is the following. When fixing a staircase $\vec{x}$ and sampling a second staircase $\vec{y}$ conditioned to share an initial prefix of random length with $\vec{x}$, the proof has to show that no vertex $v$ is too likely to lie on the ``tail'' of $\vec{y}$ (i.e. the portion after the initial segment shared with $\vec{x}$). The difference between our setting and  previous random walk based methods lies in this analysis. 
	
	We analyze separately the portions of the tail before and after it mixes. After $\vec{y}$ mixes, it is  close to being distributed according to the stationary distribution $\pi$. Before $\vec{y}$ mixes, the randomness of the point on $\vec{x}$ at which $\vec{y}$ diverges suffices to keep $\vec{y}$ close enough to being distributed according to $\pi$. A visual depiction is shown in  \cref{fig:mixing}. 
	
	This analysis is very generic, parameterized only by the stationary distribution of the walk used. This allows \cref{thm:lower_bound_in_terms_of_mixing_time} to give results for walks from arbitrary Markov chains with no additional analysis other than estimating the mixing time. The natural mixing properties of the lazy random walk on expanders then allow us to derive strong lower bounds for such graphs.

	\subsection{Related Work}
	
	\paragraph{The Boolean hypercube and the $d$-dimensional grid.} The query complexity of local search was first studied experimentally  by \cite{tovey81}.
	The first breakthrough in the theoretical analysis of local search was obtained by \cite{aldous1983minimization}.
	Aldous stated the algorithm based on  steepest descent with a warm start and showed the first nontrivial lower bound of $\Omega(2^{n/2-o(n)})$ on the query complexity for the Boolean hypercube $\{0,1\}^n$.
	
	The lower bound construction from \cite{aldous1983minimization} uses Yao's lemma and describes a hard distribution, such that if a deterministic algorithm receives a random function according to this distribution, the expected number of queries 
	issued until finding a local minimum will be large. The random function is obtained as follows:
	
	\begin{quote} Consider an initial vertex $v_0$ uniformly at random. Set the function value at $v_0$ to $f(v_0) = 0$. From this vertex, start an unbiased random walk $v_0, v_1, \ldots$ For each vertex $v$ in the graph, set $f(v)$ equal to the first hitting time of the walk at $v$; that is, let $f(v) = \min\{t \mid v_t=v\}$.
	\end{quote}
	The function $f$ defined this way has a unique local minimum at $v_0$.
	By a very delicate  analysis of this distribution,  \cite{aldous1983minimization} showed a lower bound of $\Omega(2^{n/2-o(n)})$ on the hypercube $\{0,1\}^n$.
	This almost matches the query complexity of steepest descent with a warm start, which was also analyzed in \cite{aldous1983minimization} and shown to take $\cO(\sqrt{n} \cdot 2^{n/2})$ queries in expectation on the hypercube.
	The steepest descent with a warm start algorithm applies to generic graphs too,
	resulting in $\cO(\sqrt{n \cdot d_{max}})$ queries overall for any graph with maximum degree $d_{max}$.
	
	Aldous' lower bound for the hypercube was later improved by 
	\cite{Aaronson06} to $\Omega(2^{n/2}/n^2)$ via a relational adversary method inspired from quantum computing.
	\cite{zhang2009tight} further improved this lower bound to  $\Theta(2^{n/2} \cdot \sqrt{n})$ via a ``clock''-based random walk construction, which avoids self-intersections.
	Meanwhile, \cite{llewellyn1989local} developed a deterministic divide-and-conquer approach to solving local search that is theoretically optimal over all graphs for deterministic algorithms.
	On the hypercube, their method yields a lower bound of $\Omega(2^n/\sqrt{n})$ and an upper bound of $\cO(2^n \log(n)/\sqrt{n})$.
	
	For the  $d$-dimensional grid $[n]^d$, 
	\cite{Aaronson06} used the relational adversary method there to show a randomized lower bound of $\Omega(n^{d/2-1} / \log n)$ for every constant $d \ge 3$.
    The relational adversary method that we use is inspired by Aaronson's classical relational adversary.
	\cite{zhang2009tight} proved a randomized lower bound of $\Omega(n^{d/2})$ for every constant $d \ge 4$; this is tight as shown by Aldous' generic upper bound.
	Zhang also showed improved bounds of $\Omega(n^{2/3})$ and $\Omega(n^{3/2} / \sqrt{\log n})$ for $d=2$ and $d=3$ respectively, as well as some quantum results.
	The work of \cite{sun2009quantum} closed further gaps in the quantum setting as well as the randomized $d=2$ case.
	
	The problem of local search on the grid was also studied under the context of multiple limited rounds of adaptive interactions by \cite{BL22}.  \cite{babichenko2019communication} studied the communication complexity of local search; this captures  distributed settings, where data is stored  on different computers.
	
	\paragraph{General graphs.}
	More general results are few and far between.
	On many graphs, the simple bound from \cite{aldous1983minimization} of $\Omega(d_{max})$ queries is the best known lower bound: hiding the local minimum in one of the $d_{max}$ leaves of a star subgraph requires checking about half the leaves in expectation to find it.
	\cite{santha2004quantum} gave a quantum lower bound of $\Omega\left( \sqrt[8]{\frac{s}{d_{max}}} / \log(n) \right)$, where $s$ is the separation number of the graph. This  implies the same lower bound in a randomized context, using the spectral method.
	Meanwhile, the best known upper bound is $\cO((s + d_{max}) \cdot \log n)$ due to \cite{santha2004quantum}, which was obtained via a refinement of the divide-and-conquer procedure of \cite{llewellyn1989local}. 
	
	\cite{dinh2010quantum} studied Cayley and vertex transitive graphs and gave lower bounds for local search as a function of the number of vertices and the diameter of the graph.
	\cite{Verhoeven06} obtained upper bounds as a function of the genus of the graph.
	\cite{BCR23} improved on the relational adversary method of \cite{Aaronson06} and applied it to get lower bounds for local search as a function of graph congestion, with corollaries for expansion and spectral gap. We apply their methodology to an alternate construction to achieve our results, including an improvement on their spectral gap result.

	\paragraph{Stationary points.} Local search is strongly related to the problem of local optimization where the goal is to find a stationary point of a function on $\mathbb{R}^d$.
	A standard method for solving local optimization problems is through gradient  methods, which find $\epsilon$-stationary points. 
	To show lower bounds for finding stationary points, one can similarly  define a function that selects a walk in the underlying space and hides a stationary point at the end of the walk.  Handling (potential) smoothness of the function and  ensuring the stationary point is unique  represent additional challenges. 
	For examples of works on algorithms and complexity of computing approximate stationary points, see, e.g., \cite{vavasis1993black,pmlr-v119-zhang20p,  stationary_I,stationary_II,bubeck2020trap,pmlr-v119-drori20a}).
	
	
	\paragraph{Complexity classes.}
	The computational complexity of local search is captured by the class PLS, defined by \cite{DBLP:journals/jcss/JohnsonPY88} to model the difficulty of finding locally optimal solutions to optimization problems.
	A related class is PPAD, introduced in \cite{Papadimitriou_1994} to study the computational complexity of finding a Brouwer fixed-point.
	Both PLS and PPAD are subsets of  the class TFNP.
	
	The class PPAD contains many natural problems that are computationally equivalent to finding a Brouwer fixed point (\cite{CD09}), such as finding an approximate Nash equilibrium in a multiplayer or two-player game (\cite{daskalakis2009complexity,chen2009settling}), an Arrow-Debreu equilibrium in a market (\cite{vazirani2011market,chen2017complexity}), and a local min-max point (\cite{costis_minmax}).
	The query complexity of computing an $\eps$-approximate Brouwer fixed point was studied in a series of papers starting with \cite{hirsch1989exponential},  which introduced a construction where  the function is induced by a  hidden walk. This was  later improved by \cite{chen2005algorithms} and \cite{chen2007paths}.
	Recently, \cite{fearnley2022complexity} showed that the class CLS, introduced by \cite{daskalakis2011continuous} to capture continuous local search, is equal to PPAD $\cap$ PLS.
	

	\section{Preliminaries} 
	We fix a discrete-time Markov chain with transition matrix $\cP$ that has the properties required by \cref{thm:lower_bound_in_terms_of_mixing_time}: lazy, irreducible, and reversible. Let $\pi$ denote the unique stationary distribution of the chain. For each $S \subseteq V$, let $\pi(S) = \sum_{v \in S} \pi(v)$. 
	Moreover, let  
	\begin{align}
		\sigma = \max_{u,v \in V} {\pi(v)}/{\pi(u)}\,.
	\end{align}
	For every $k \in \mathbbm{N}$ and every walk $\vec{x} = (x_0, x_1, \ldots, x_k)$  in $G$, let $\cP[\vec{x}]$ be the probability that the random walk started at $x_0$ with transition matrix $\cP$ has trajectory $\vec{x}$, that is:
	\begin{align} \label{eq:def:Pr_x}
		\cP[\vec{x}] = \prod_{i=0}^{k-1} \cP_{x_i, x_{i+1}}\,.
	\end{align}

	\paragraph{Bottleneck Ratio.}
	The bottleneck ratio $\Phi_\star$ of the Markov chain with transition matrix $\cP$ is \footnote{See, e.g.,  \cite{levin2017markov} equation 7.5.}:
	\[
	\Phi_\star = \min_{\substack{S \subseteq V:  \pi(S) \le \frac{1}{2}}} \; \sum_{u \in S,v \in V \setminus S} \frac{\pi(u) \cP_{u,v}}{\pi(S)}.
	\] 
	
	\paragraph{Visiting probability.}
	For each pair of vertices $u,v \in V$ and integer $\ell \in \mathbbm{N}$: 
	\begin{itemize}
		\item let  $\mbox{P}_{visit}(u,v,\ell)$ be the probability that a random walk that has  transition matrix  $\cP$,  length $\ell$, and starts at $u$  visits $v$.
		\item let $\mbox{E}_{visit}(u,v,\ell)$ be the expected number of times that a random walk that has transition matrix $\cP$,  length $\ell$, and starts at $u$  visits $v$.
		\item let  $\mbox{P}_{end}(u,v,\ell)$ be the probability that a random walk that has  transition matrix $\cP$,  length $\ell$, and starts at $u$  ends at $v$.
	\end{itemize}

	\paragraph{Edge Expansion.} Let $E(S, V\setminus S) = \{(u,v) \in E \mid u \in S, v \in V \setminus  S\}$ be the set of edges with one endpoint in $S$ and the other in $V \setminus S$.
	The edge expansion of $G$ is 
	\[ \beta = \min_{\substack{S \subseteq V: |S| \le n/2}} \frac{\left|E(S, V\setminus S) \right|}{|S|}\,.
	\]

	One of the main ingredients in our proof is a variant of the (classical) relational adversary method from quantum computing given in \cite{BCR23}.
	
	\begin{lemma}[\cite{BCR23}, Theorem 3]
		\label{thm:relational_adversary}
		Consider finite sets $A$ and $B$, a set $\cX \subseteq B^{A}$ of functions, and a map $\cH : \cX \to \{0,1\}$ which assigns a label to each function in $\cX$.
		Additionally, we get oracle access to an unknown function $F^* \in \cX$.
		The problem is to compute $\cH(F^*)$ using as few queries to $F^*$ as possible.\footnote{In other words, we have free access to $\cH$ and the only queries  counted are the ones to $F^*$, which will be of the form: ``What is $F^*(a)$?'', for some $a \in A$. The oracle will return $F^*(a)$ in one computational step.}
		
		Let $r : \cX \times \cX \to \R_{\geq 0}$ be a non-zero symmetric function of our choice with $r(F_1, F_2) = 0$ whenever $\cH(F_1) = \cH(F_2)$.
		For each $\cZ \subseteq \cX$, define  
		\begin{align}  \label{eq:def_M_and_q}
			M(\cZ) = \sum_{F_1 \in \cZ} \sum_{F_2 \in \cX} r(F_1, F_2) \;  \; \; \; \text{ and } \; \; \; \; q(\cZ) = \max_{a \in A}{\sum_{F_1 \in \cZ} \sum_{F_2 \in \cZ} r(F_1,F_2) \cdot \mathbbm{1}_{\{F_1(a) \neq F_2(a)\}}}\,.
		\end{align}
		
		If there exists a subset $\cZ \subseteq \cX$ with $q(\cZ) > 0$, 
		then the randomized query complexity of the problem is
		at least
		\begin{equation}
			\label{eq:relational_variant}
			\min\limits_{\substack{\cZ \subseteq \cX: q(\cZ) > 0}}
			0.01 \cdot  {M(\cZ)}/{q(\cZ)}  \,.
		\end{equation}
	\end{lemma}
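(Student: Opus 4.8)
This statement is Theorem~3 of \cite{BCR23}, so in the body of the paper it is used as a black box; here I describe how I would prove it from scratch via the classical (non-quantum) relational adversary method of \cite{Aaronson06}. Throughout, I would fix an arbitrary $\cZ \subseteq \cX$ with $q(\cZ) > 0$ and set $\theta(F) = \sum_{F' \in \cX} r(F, F')$ for $F \in \cZ$, so that $M(\cZ) = \sum_{F \in \cZ} \theta(F)$. It suffices to show the randomized query complexity is $\Omega\!\big(M(\cZ)/q(\cZ)\big)$ for this fixed $\cZ$, with the precise constant $0.01$ recovered by tightening the reduction below; since the claimed bound is a minimum over all such $\cZ$, it then follows at once.

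First I would pass from a hypothetical fast randomized algorithm to a deterministic decision tree run on a hard input distribution. Assuming a randomized protocol that outputs $\cH(F^*)$ with error at most $1/10$ and expected cost at most $T$ on every input, a standard reduction --- light success amplification, Yao's principle against a fixed input distribution $\mu$, and Markov truncation of the run length --- would yield a deterministic decision tree $\cA$ of depth $O(T)$ that errs with probability at most (say) $1/5$ when the input is drawn from $\mu$. The distribution I would use is the mixture that, with probability $1/2$ each, draws $F_1$ with $\Pr[F_1 = F] \propto \theta(F)\,\mathbbm{1}_{\{F\in\cZ\}}$ or draws $F_2$ with $\Pr[F_2 = F] \propto \sum_{F_1\in\cZ} r(F_1,F)$; on each of these two marginals separately $\cA$ then errs with probability at most $2/5$.

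Next I would use correctness to force separating queries. The idea is to couple a ``true'' input $F_1$ (first marginal) with a partner $F_2$ drawn with probability $r(F_1,F_2)/\theta(F_1)$, so that the joint law of $(F_1,F_2)$ on $\cZ\times\cX$ is $r(F_1,F_2)/M(\cZ)$ and $\cH(F_1)\neq\cH(F_2)$ whenever $r(F_1,F_2)>0$ (since $r$ vanishes on equally labelled pairs). If the run of $\cA$ on $F_1$ never queries an index $a$ with $F_1(a)\neq F_2(a)$, then $F_2$ is consistent with the entire transcript, so $\cA$ on $F_2$ produces the same transcript and output, which is necessarily wrong on $F_1$ or on $F_2$. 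A union bound over the two marginals would then give
\[
\sum_{F_1\in\cZ}\sum_{F_2\in\cX} r(F_1,F_2)\cdot\mathbbm{1}\big[\text{$\cA$ on $F_1$ queries some $a$ with $F_1(a)\neq F_2(a)$}\big]\ \ge\ \tfrac15\,M(\cZ).
\]

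Finally I would bound the left-hand side from above by amortizing over the at most $O(T)$ query steps of $\cA$: since each pair $(F_1,F_2)$ becomes inconsistent with the growing transcript at most once, the left side equals the total $r$-weight of related pairs ``separated'' over the whole run. The crux --- and the technical heart of \cite{BCR23} --- is a potential argument showing this total is $O\!\big(T\cdot q(\cZ)\big)$: conditioned on any transcript, the next (now fixed) query index can only separate pairs still consistent with it, and the surviving related weight is tracked as a potential whose drop is charged, step by step, against the fixed quantity $q(\cZ)$. Combining the two estimates yields $T = \Omega(M(\cZ)/q(\cZ))$. The main obstacle I expect is exactly this last step: for an \emph{adaptive} algorithm the weight separated at a single step can itself be large, so one cannot argue step-by-step in the naive way; what saves the argument is that $q(\cZ)$ is defined relative to the \emph{fixed} set $\cZ$ rather than to the shrinking set of inputs consistent with the current transcript --- which is precisely the role of the asymmetry between $\cZ$ and $\cX$ in the definitions of $M$ and $q$ --- and making this amortization and its constants precise is the content of \cite{BCR23}.
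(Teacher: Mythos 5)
The paper does not prove this lemma: it is Theorem~3 of \cite{BCR23}, imported verbatim, so there is no in-paper proof to compare against beyond the citation. That said, your sketch does follow the right high-level shape for the classical relational adversary underlying \cite{BCR23} (Yao's principle against a deterministic decision tree, a hard distribution weighted by $r$, a coupling in which correctness forces a separating query, and an amortization against $q$), and you correctly identify the amortization as the technical core.

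However, the amortization step as you describe it contains a real gap that you neither close nor accurately diagnose. Your step~5 lower-bounds the separated weight over pairs $(F_1,F_2)\in \cZ\times\cX$, which matches the definition $M(\cZ)=\sum_{F_1\in\cZ}\sum_{F_2\in\cX} r(F_1,F_2)$. But $q(\cZ)=\max_a\sum_{F_1,F_2\in\cZ} r(F_1,F_2)\,\mathbbm{1}_{\{F_1(a)\ne F_2(a)\}}$ only ranges over pairs in $\cZ\times\cZ$; the per-step separated mass contributed by pairs with $F_2\in\cX\setminus\cZ$ is simply not controlled by $q(\cZ)$, so ``charging the drop against $q(\cZ)$'' does not follow from the definitions as stated. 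Your remark that the asymmetry between $\cZ$ and $\cX$ in the definitions of $M$ and $q$ is ``what saves the argument'' therefore has the direction backwards --- the asymmetry (larger range for $M$, smaller range for $q$) is precisely what makes the amortization nontrivial and is the thing that needs to be explained, not invoked. Relatedly, you never account for why the final bound takes the form of a minimum over all $\cZ\subseteq\cX$ with $q(\cZ)>0$; in \cite{BCR23} this minimum is tied to how the set of inputs consistent with the transcript evolves as the algorithm runs, and any proof from scratch has to say why such a minimum emerges rather than, say, the single quantity $M(\cX)/q(\cX)$. You are candid that the precise amortization ``is the content of \cite{BCR23},'' which is fair, but the intuition offered for how that step goes through is not correct as written.
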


	\section{Lower bounds via mixing time}

	To get lower bounds for local search, we will analyze the performance of deterministic algorithms when the input distribution is obtained by considering random functions, each of which is defined using a classical ``staircase'' construction. Each staircase is a random walk with transition  matrix $\cP$ on $G$. 
	We first give the setup and then prove the main theorem.
	
	\subsection{Setup}
	


	\begin{mydefinition}[Set of walks $\mathcal{W}$ and  parameter $T$] \label{def:W_and_T}
		Let $L = \lfloor \sqrt{n} \rfloor \cdot T$, where \mbox{$T = t_{mix}(\frac{\sigma}{2n})$}.
		Let $\cW$ be the set of walks $\{\vec{w} \mid \vec{w} = (w_0, \ldots, w_L) \}$ in $G$ with $w_0$ equal to the vertex $1$ and with $\cP_{w_i, w_{i+1}} > 0$ for all $0 \le i < L$.
	\end{mydefinition}
	
	\begin{mydefinition}[Milestones] \label{def:milestone}
		Given a walk $\vec{x} = (x_0, x_1, \ldots, x_L) \in \cW$,  every $T$-th vertex of the walk (including the first vertex) is called a ``milestone''.
		E.g.,  the first three  milestones of $\vec{x}$ are $x_0$, $x_T$, and $x_{2T}$.
	\end{mydefinition}
	
	\begin{mydefinition}[Good/bad walk] \label{def:good}
		A walk $\vec{x} \in \cW$ is ``good'' if it does not repeat any milestones and ``bad'' otherwise. 
		Let $good(\vec{x}) = True$ if  $\vec{x}$ is good and $False$ otherwise.
	\end{mydefinition}
	
	\begin{mydefinition}[Heads and Tails.] \label{def:heads_and_tails}
		For every walk $\vec{x} = (x_0, x_1,  \ldots, x_L) \in \cW$, let  
		\begin{align} 
			\begin{cases} 
				Head(\vec{x},j)  = (x_0, x_1,  \ldots, x_{j \cdot T})
				& \forall  j \in \{0, \ldots, \lfloor \sqrt{n} \rfloor\}  \,. \\
				Tail(\vec{x}, j) = (x_{j \cdot T+1}, x_{j \cdot T+2}, \ldots, x_L)
				& \forall  j \in \{0, \ldots, \lfloor \sqrt{n} \rfloor\}  \,. \\
				Tail(\vec{x},j_1,j_2) = (x_{j_1 \cdot T+1}, x_{j_1 \cdot T+2}, \ldots, x_{j_2 \cdot T})
				& \forall j_1, j_2 \in \{0, \ldots, \lfloor \sqrt{n} \rfloor \} \mbox{ with } j_1 \leq j_2\,.
			\end{cases}
		\end{align}
		For all $\vec{x}, \vec{y} \in \cW$,  define 
		$J(\vec{x},\vec{y})$ as the maximum index $j$ with $Head(\vec{x},j) = Head(\vec{y},j)$.
	\end{mydefinition}


	\begin{definition}[The functions $f_{\vec{x}}$ and $g_{\vec{x},b}$; the  set $\mathcal{X}$] \label{def:f_x_and_g_x_b}
		For each walk $\vec{x} = (x_0,  \ldots, x_L) \in \cW$, define a  function $f_{\vec{x}} : [n] \to \{-L, -L+1, \ldots, n\}$  such that for all $v \in [n]$:
		\begin{align}
			f_{\vec{x}}(v) = 
			\begin{cases}
				dist(v,1) & \text{ if } v \not \in \vec{x}  \\
				-\max\Bigl\{i \in \{0, \ldots, L\} \mid  x_i = v\Bigr\} & \text{ if } v \in \vec{x}\,. \\ 
			\end{cases}
		\end{align}
		For all $b \in \{0,1\}$, define  $g_{\vec{x},b} : [n] \to \{-L,-L+1, \ldots, n\} \times \{-1,0,1\}$ so that for all $ v \in [n]$:
		\begin{align}
			g_{\vec{x},b}(v) = 
			\begin{cases}
				(f_{\vec{x}}(v), -1) & \text{ if } v \neq x_{L} \\
				(f_{\vec{x}}(v), b) & \text{ if } v = x_{L}\,.
			\end{cases}
		\end{align}
		Let $\cX = \Bigl\{g_{\vec{x},b} \mid \vec{x} \in \cW \text{ and } b \in \{0,1\}\Bigr\}$.
	\end{definition}
	
	
	\begin{definition} [Valid function] \label{def:valid_function}
		Let $\vec{x} = (x_0, \ldots, x_\ell)$ be a walk in $G$. A function $f : V \to \mathbb{R}$ is valid with respect to the walk $\vec{x}$ if it satisfies the next conditions:
		\begin{enumerate}
			\item For all $u,v \in \vec{x}$, if $\max\{i \in \{0, \ldots, \ell\} \mid v = x_i\} < \max\{i \in \{ 0, \ldots, \ell\} \mid u = x_i\}$, then $f(v) > f(u)$.
			{In other words, as one walks along the walk $\vec{x}$ starting from $x_0$ until $x_\ell$, if the last time the vertex $v$ appears is before the last time that vertex $u$ appears, then $f(v) > f(u)$.}
			
			\item For all $v \in V \setminus \vec{x}$, we have $f(v) = dist(x_0, v) > 0$.
			
			\item $f(x_i) \leq 0$ for all $i \in \{0, \ldots, \ell \}$.
		\end{enumerate}
	\end{definition}

	
	We define a similarity measure $r$ between functions from $\cX$, commonly referred to as the \emph{relation}.

	\begin{mydefinition} [The function $r$]\label{def:r}
		Let $r : \mathcal{X} \times \mathcal{X} \to \mathbb{R}_{\ge 0}$  be a symmetric function such that for  each \mbox{$\X, \Y \in \cW$} and $b_1,b_2 \in \{0,1\}$,  
		\begin{align}
			r(g_{\X,b_1}, g_{\Y,b_2}) = \begin{cases}
				0 & \text{If } b_1 = b_2 \text{ or } \vec{x} = \vec{y} \text{ or } \X \text{ is bad or } \Y \text{ is bad.}\\
				\frac{\cP[\vec{x}] \cP[\vec{y}]}{\cP[Head(\vec{y},j)]} & \text{Otherwise, where $j = J(\vec{x},\vec{y})$}\,. 
			\end{cases} \notag 
		\end{align}
	\end{mydefinition}

    This value $r(\vec{x},\vec{y})$ may be interpreted as the probability that two walks sampled from the Markov chain will be $\vec{x}$ and $\vec{y}$ respectively, conditioned on diverging for the first time at an index in the range $(jT,(j+1)T]$.
	
	\begin{observation}
		The function $r$ from \cref{def:r} is symmetric,  since by definition of $j$ we have $Head(\vec{x},j) = Head(\vec{y},j)$.
	\end{observation}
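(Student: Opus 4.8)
The plan is to verify symmetry directly from the piecewise definition of $r$ in \cref{def:r}, treating the two branches separately. Write $P := (\X, b_1)$ and $Q := (\Y, b_2)$, so that we must show $r(g_{\X,b_1}, g_{\Y,b_2}) = r(g_{\Y,b_2}, g_{\X,b_1})$.

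First I would observe that the condition selecting the top branch — namely ``$b_1 = b_2$, or $\vec{x} = \vec{y}$, or $\X$ is bad, or $\Y$ is bad'' — is manifestly invariant under swapping the roles of $(\X, b_1)$ and $(\Y, b_2)$. Hence if this condition holds, then $r$ takes the value $0$ in both orderings, and we are done in that case.

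In the remaining case, $r(g_{\X,b_1}, g_{\Y,b_2}) = \cP[\vec{x}]\,\cP[\vec{y}] / \cP[Head(\vec{y},j)]$ with $j = J(\vec{x},\vec{y})$, while $r(g_{\Y,b_2}, g_{\X,b_1}) = \cP[\vec{y}]\,\cP[\vec{x}] / \cP[Head(\vec{x},j')]$ with $j' = J(\vec{y},\vec{x})$. The numerators agree trivially. For the denominators, the key point (already flagged in the statement) is that $J(\cdot,\cdot)$ is symmetric: by \cref{def:heads_and_tails} it is the largest index $j$ with $Head(\vec{x},j) = Head(\vec{y},j)$, and this set of indices is symmetric in $\vec{x}, \vec{y}$ and nonempty (it contains $0$, since $x_0 = 1 = y_0$ for walks in $\cW$), so the maximum is well defined and $j = j'$. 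Moreover, at this index $Head(\vec{x},j) = Head(\vec{y},j)$ by definition of $J$, hence $\cP[Head(\vec{x},j)] = \cP[Head(\vec{y},j)]$ and the two expressions coincide. Therefore $r$ is symmetric. There is no real obstacle here; the only point requiring a moment's care is confirming that $J(\vec{x},\vec{y})$ is well defined (the index set is nonempty) and symmetric, which is immediate from \cref{def:W_and_T} and \cref{def:heads_and_tails}.
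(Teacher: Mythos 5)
Your proof is correct and follows essentially the same reasoning the paper gives, just spelled out in full: the zero-branch condition is manifestly symmetric, $J(\vec{x},\vec{y})$ is symmetric and well defined (since $x_0 = y_0 = 1$ for all walks in $\cW$), and $Head(\vec{x},j) = Head(\vec{y},j)$ at $j = J(\vec{x},\vec{y})$ makes the denominators agree. The paper states this as a one-line observation rather than a proof, and your elaboration fills in exactly the checks it leaves implicit.
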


	Having defined the relation $r$,  applying  
	\cref{thm:relational_adversary}
	will give a lower bound by considering a distribution $p$ over $\cX$, where each function $F \in \cX$ is given as input with probability 
	$
	p(F) = \frac{M(\{F\})}{M(\cX)},
	$	
	where $M$ is as defined in \cref{eq:def_M_and_q} for the relation $r$.
	
	What remains to be done is to explain how to invoke \cref{thm:relational_adversary} and estimate the lower bound it gives when the input  distribution is $p$.
	
	\subsection{Proof of the main theorem}


	
	\lowmixingtimeimplieslocalsearchhard* 
	\begin{proof}
		First, we may assume that $n \ge 16\sigma^2$.
		This is because in the alternative case where $n < 16\sigma^2$, the theorem statement does not give anything useful as 
		$\sqrt{n} < \exp(3\sigma)$ and $t_{mix}(\sigma/(2n)) \ge 1$.

		The value functions we will use are of the form $f_{\vec{x}}$ as seen in \cref{def:f_x_and_g_x_b}.
		These functions are parametrized by walks $\vec{x}$ of length $L$ from the set $\cW$ defined in \cref{def:W_and_T}.
		For sake of invoking \cref{thm:relational_adversary} however, we must turn the local search problem into a decision problem.
		To do this, we use the technique shown in ~\cite{Aaronson06,dinh2010quantum}:
		associate with each function $f_{\vec{x}}$ the function $g_{\vec{x},b}$ defined in \cref{def:f_x_and_g_x_b}.
		This hides a bit at the local minimum vertex (while hiding the value $-1$ at every other vertex).
		The decision problem is: given the graph $G$ and oracle access to a function $g_{\vec{x},b}$, return the hidden bit $b$; i.e. we set the function $\cH$ for use in \cref{thm:relational_adversary} as $\cH(g_{\vec{x},b}) = b$.
		
		
		By \cref{lem:proof_f_X_is_valid}, the function $f_{\vec{x}}$ as defined in \cref{def:f_x_and_g_x_b} is valid.
		Therefore by \cref{lem:valid_unique_local_min}, $f_\vec{x}$ has a unique local minimum, namely $x_L$.
		This means that $g_{\vec{x},b}$ as defined in \cref{def:f_x_and_g_x_b} does indeed hide the bit $b$ only at the local minimum of $f_\vec{x}$.
		Therefore measuring the query complexity of this decision problem will give the answer for local search, as the following two problems have query complexity within additive $1$:
		\begin{itemize}
			\item \emph{search problem}: given oracle access to a function $f_x$, return a vertex $v$ that is a local minimum;
			\item \emph{decision problem}: given oracle access to a function $g_{x,b}$, return $b$.
		\end{itemize}
		
		We then invoke \cref{thm:relational_adversary} with $\cX$ as defined in \cref{def:f_x_and_g_x_b} and with $\cH(g_{\vec{x},b}) = b$.
		This tells us that the randomized query complexity of the decision problem is at least 
		$\min\limits_{\substack{\cZ \subseteq \cX: q(\cZ) > 0}}
		0.01 \cdot  {M(\cZ)}/{q(\cZ)}$, with $M(\cZ)$ and $q(\cZ)$ as defined in \cref{thm:relational_adversary}.
		
		By \cref{lem:exists_Z_with_qZ_positive}, there exists a subset  $\cZ \subseteq \mathcal{X}$  with $q(\cZ) > 0$.
		From this point on, we fix an arbitrary subset $\cZ \subseteq \mathcal{X}$ of functions with $q(\cZ) > 0$ and will then lower bound $M(\cZ)$ and upper bound $q(\cZ)$.  
		
		\paragraph{Lower bounding $M(\cZ)$.}
		Because $q(\cZ) > 0$, we know $\cZ$ is not empty.
		Consider an arbitrary function $g_{\vec{x},b_1} \in \cZ$ with $\vec{x}$ good. 
		By definition of $M$, we have 
		\begin{align}
			M(\{g_{\vec{x},b_1}\}) & = \sum_{g_{\vec{y},b_2} \in \cX} r(g_{\vec{x},b_1}, g_{\vec{y},b_2}),
		\end{align}
		recalling:
		\begin{description}
			\item[$\bullet$] $\cP[\vec{y}] = \prod_{i=0}^{k-1} \cP_{y_i, y_{i+1}}$ for every walk $\vec{y} = (y_0, y_1, \ldots, y_k)$  in $G$;
			\item[$\bullet$] \cref{def:heads_and_tails} of $Head(\vec{y},j)$ and $J(\vec{x},\vec{y})$ and  \cref{def:r} of  the relation $r$:
			\begin{align}
				r(g_{\X,b_1}, g_{\Y,b_2}) = \begin{cases}
					0 & \text{If } b_1 = b_2 \text{ or } \vec{x} = \vec{y} \text{ or } \X \text{ is bad or } \Y \text{ is bad.}\\
					\frac{\cP[\vec{x}] \cP[\vec{y}]}{\cP[Head(\vec{y},j)]} & \text{Otherwise, where $j = J(\vec{x},\vec{y})$}\,. 
				\end{cases} \notag 
			\end{align}
		\end{description} 
		
		Then we can rewrite $M(g_{\vec{x}, b_1})$ as 
		\begin{align}
			M(\{g_{\vec{x},b_1}\})  & = \sum_{\substack{g_{\vec{y},b_2} \in \mathcal{X}:\\  b_2 = 1-b_1\\good(\vec{y})\\\vec{x} \neq \vec{y}}} \frac{\cP[\vec{x}] \cP[\vec{y}]}{\cP[Head(\vec{y},J(\vec{x},\vec{y}))]}   \,. \label{eq:rewriting_M_to_have_diff_bits}
		\end{align}
		
		The condition $\vec{x} \neq  \vec{y}$ in \cref{eq:rewriting_M_to_have_diff_bits} is equivalent to $J(\vec{x},\vec{y}) \neq \lfloor \sqrt{n} \rfloor$, so it must be the case that $J(\vec{x}, \vec{y}) \in \left\{ 0, \ldots, \lfloor \sqrt{n} \rfloor  - 1 \right\}$. We decompose the summation in \cref{eq:rewriting_M_to_have_diff_bits} by the value of $J(\vec{x}, \vec{y})$, excluding  cases where $r(g_{\vec{x},b_1}, g_{\vec{y},b_2}) = 0$, to get the following:
		\begin{align}
			M(\{g_{\vec{x},b_1}\}) &= \cP[\vec{x}] \sum_{j=0}^{\lfloor \sqrt{n} \rfloor-1} \sum_{\substack{g_{\vec{y},b_2} \in \mathcal{X}:\\ J(\vec{x},\vec{y}) = j\\ b_2 = 1-b_1\\good(\vec{y})}} \frac{\cP[\vec{y}]}{\cP[Head(\vec{y},j)]}\,. \label{eq:M_analysis_1_1}
		\end{align}		
		There is a one-to-one correspondence between walks $\vec{y} \in \cW$ and functions $g_{\vec{y},b_2}$ with $b_2 = 1-b_1$.
		Therefore we may equivalently sum over $\vec{y} \in \cW$ in \cref{eq:M_analysis_1_1}:
		\begin{align}
			M(\{g_{\vec{x},b_1}\})
			&= \cP[\vec{x}] \sum_{j=0}^{\lfloor \sqrt{n} \rfloor-1} \sum_{\substack{\vec{y} \in \cW:\\ J(\vec{x},\vec{y}) = j\\good(\vec{y})}} \frac{\cP[\vec{y}]}{\cP[Head(\vec{y},j)]} \,. \label{eq:M_analysis_1_2}
		\end{align}	
		Since $\vec{y}$ is the concatenation of $Head(\vec{y},j)$ with $Tail(\vec{y},j)$, we have:
		\begin{align}  \label{eq:reminder_P_y_decomposition_tail_head}
			\cP[\vec{y}] = \cP[Head(\vec{y},j)] \cdot \cP[Tail(\vec{y},j)] \cdot \cP_{y_{j\cdot T},\; y_{j \cdot T + 1}}\,.
		\end{align}
		Substituting \cref{eq:reminder_P_y_decomposition_tail_head} in \cref{eq:M_analysis_1_2} gives
		\begin{align}
			M(\{g_{\vec{x},b_1}\})
			&= \cP[\vec{x}] \sum_{j=0}^{\lfloor \sqrt{n} \rfloor-1} \sum_{\substack{\vec{y} \in \cW:\\ J(\vec{x},\vec{y}) = j\\ good(\vec{y})}} \cP[Tail(\vec{y},j)] \cdot \cP_{y_{j\cdot T},\; y_{j \cdot T + 1}} \,. \label{eq:M_analysis_1}
		\end{align}
		If it weren't for the restrictions that $J(\vec{x},\vec{y}) = j$ (instead of $J(\vec{x},\vec{y}) \ge j$) and $good(y)$, then the inner sum would be over all possible walks from $x_{jT}$, and would thus be $1$. Because of those restrictions, we instead invoke \cref{lem:y_counting_for_MZ} to continue from \cref{eq:M_analysis_1} as follows:
		\begin{align}
			M(\{g_{\vec{x},b_1}\})
			&\ge \cP[\vec{x}] \cdot 2^{-4\sigma} \cdot \lfloor \sqrt{n} \rfloor\,.
		\end{align}
		
		If $\vec{x}$ is bad, then $M(\{g_{\vec{x},b_1}\}) = 0$, and so the function $g_{\vec{x},b_1}$ does not contribute to $M(\cZ)$.
		Therefore
		\begin{align}
			M(\cZ) \ge 2^{-4\sigma} \lfloor \sqrt{n} \rfloor \sum_{\substack{g_{\vec{x},b_1} \in \cZ\\ good(\vec{x})}} \cP[\vec{x}] \,. \label{eq:M_bound}
		\end{align}
		
		\paragraph{Upper bounding $q(\cZ)$.}
		Define 
		\begin{align} \label{def:q_tilde} \widetilde{q}(\cZ,v) =  \sum_{g_{\vec{x},b_1} \in \cZ} \sum_{g_{\vec{y},b_2} \in \cZ} r(g_{\vec{x},b_1},g_{\vec{y},b_2}) \mathbbm{1}_{\{g_{\vec{x},b_1}(v) \ne g_{\vec{y},b_2}(v)\}}\,.
		\end{align}
		By definition of $q(\cZ)$, we have  
		\begin{align} 
			q(\cZ) = 
			\max_{v \in V} \sum_{g_{\vec{x},b_1} \in \cZ} \sum_{g_{\vec{y},b_2} \in \cZ} r(g_{\vec{x},b_1},g_{\vec{y},b_2}) \mathbbm{1}_{\{g_{\vec{x},b_1}(v) \ne g_{\vec{y},b_2}(v)\}}
			= \max_{v \in V} \widetilde{q}(\cZ,v)\,.
		\end{align}
		
		We will bound $q(\cZ)$ by bounding $\widetilde{q}(\cZ,v)$ for an arbitrary $v$.
		Fix an arbitrary vertex $v$.
		We first partition  the inner sum according to $J(\vec{x},\vec{y})$:
		\begin{align}
			\widetilde{q}(\cZ,v)
			&= \sum_{g_{\vec{x},b_1} \in \cZ} \sum_{g_{\vec{y},b_2} \in \cZ} r(g_{\vec{x},b_1},g_{\vec{y},b_2}) \mathbbm{1}_{\{g_{\vec{x},b_1}(v) \ne g_{\vec{y},b_2}(v)\}} \explain{By definition of $\widetilde{q}(\cZ,v)$.}\\
			&= \sum_{g_{\vec{x},b_1} \in \cZ} \sum_{j=0}^{\lfloor \sqrt{n} \rfloor-1} \sum_{\substack{g_{\vec{y},b_2} \in \cZ:\\ J(\vec{x},\vec{y}) = j}} r(g_{\vec{x},b_1},g_{\vec{y},b_2}) \mathbbm{1}_{\{g_{\vec{x},b_1}(v) \ne g_{\vec{y},b_2}(v)\}} \,. \label{eq:q_analysis_1_1}
		\end{align}
		
		By \cref{lem:r_tilde}, we may continue from \cref{eq:q_analysis_1_1} and expand to get
		\begin{align}
			\widetilde{q}(\cZ,v) &\le 2 \sum_{g_{\vec{x},b_1} \in \cZ} \sum_{j=0}^{\lfloor \sqrt{n} \rfloor-1} \sum_{\substack{g_{\vec{y},b_2} \in \cZ:\\ v \in Tail(\vec{y}, j)\\ J(\vec{x},\vec{y}) = j}} r(g_{\vec{x},b_1},g_{\vec{y},b_2}) \\
			&= 2 \sum_{\substack{g_{\vec{x},b_1} \in \cZ\\ good(\vec{x})}} \cP[\vec{x}] \sum_{j=0}^{\lfloor \sqrt{n} \rfloor-1} \sum_{\substack{g_{\vec{y},b_2} \in \cZ:\\ v \in Tail(\vec{y}, j) \\ J(\vec{x},\vec{y}) = j\\ b_2 = 1-b_1\\ good(\vec{y})}}  \frac{\cP[\vec{y}]}{\cP[Head(\vec{y},j)]}
			\explain{Using the definition of $r$.} \\ 
			&\le 2 \sum_{\substack{g_{\vec{x},b_1} \in \cZ\\ good(\vec{x})}} \cP[\vec{x}] \sum_{j=0}^{\lfloor \sqrt{n} \rfloor-1} \sum_{\substack{g_{\vec{y},b_2} \in \cX:\\ v \in Tail(\vec{y}, j) \\ J(\vec{x},\vec{y}) = j\\ b_2 = 1-b_1\\ good(\vec{y})}}  \frac{\cP[\vec{y}]}{\cP[Head(\vec{y},j)]} \,. 
			\explain{Since $\cZ \subseteq \cX$.} \\ \label{eq:q_analysis_2_2}
		\end{align}
		
		Again, there is a one-to-one correspondence between walks $\vec{y} \in \cW$ and functions $g_{\vec{y},b_2} \in \cX$ with $b_2 = 1-b_1$, so we may equivalently sum over $\vec{y} \in \cW$ in \cref{eq:q_analysis_2_2}.
		Additionally, we expand the scope from $J(\vec{x},\vec{y}) = j$ to $J(\vec{x},\vec{y}) \ge j$, which is equivalent to $Head(\vec{x},j) = Head(\vec{y},j)$.
		Finally, we drop the requirement that $good(\cY)$.
		Continuing from \cref{eq:q_analysis_2_2},	
		\begin{align}
			\widetilde{q}(\cZ,v) &\le 2 \sum_{\substack{g_{\vec{x},b_1} \in \cZ\\ good(\vec{x})}} \cP[\vec{x}] \sum_{j=0}^{\lfloor \sqrt{n} \rfloor-1} \sum_{\substack{\vec{y} \in \cW:\\ v \in Tail(\vec{y}, j) \\ J(\vec{x},\vec{y}) \ge j}}  \frac{\cP[\vec{y}]}{\cP[Head(\vec{y},j)]} \,. 
			\label{eq:q_analysis_2}
		\end{align}
		
		From here we partition based on where in $Tail(\vec{y},j)$ the vertex $v$ lies: the first short part of the tail, i.e. $Tail(\vec{y},j,j+1)$, or the rest of the tail, i.e. $Tail(\vec{y},j+1)$.
		Continuing from \cref{eq:q_analysis_2},
		\begin{align}
			\widetilde{q}(\cZ,v)
			&\le 2 \sum_{\substack{g_{\vec{x},b_1} \in \cZ\\ good(\vec{x})}} \cP[\vec{x}] \sum_{j=0}^{\lfloor \sqrt{n} \rfloor-1} \left( \sum_{\substack{\vec{y} \in \cW:\\ v \in Tail(\vec{y}, j, j+1) \\ J(\vec{x},\vec{y}) \ge j}}  \frac{\cP[\vec{y}]}{\cP[Head(\vec{y},j)]} + \sum_{\substack{\vec{y} \in \cW:\\ v \in Tail(\vec{y}, j+1)\\ J(\vec{x},\vec{y}) \ge j}} \frac{\cP[\vec{y}]}{\cP[Head(\vec{y},j)]} \right) \label{eq:qz_start_1}
		\end{align}
		
		Since $\vec{y}$ is the concatenation of $Head(\vec{y},j)$ with $Tail(\vec{y},j)$, we have 
		\begin{align} \label{eq:decompose_y_head_tail} \cP[\vec{y}] = \cP[Head(\vec{y},j)] \cdot \cP[Tail(\vec{y},j)] \cdot \cP_{y_{j\cdot T},\; y_{j \cdot T + 1}}\,.
		\end{align}
		Using \cref{eq:decompose_y_head_tail} in  \cref{eq:qz_start_1}, we obtain 
		\begin{small}
			\begin{equation}
				\begin{split}
					\widetilde{q}(\cZ,v) \le 2 \sum_{\substack{g_{\vec{x},b_1} \in \cZ\\ good(\vec{x})}} \cP[\vec{x}] \sum_{j=0}^{\lfloor \sqrt{n} \rfloor-1} &\Biggl( \sum_{\substack{\vec{y} \in \cW:\\ v \in Tail(\vec{y}, j, j+1) \\ J(\vec{x},\vec{y}) \ge j}}  \cP[Tail(\vec{y},j)] \cdot \cP_{y_{j\cdot T},\; y_{j \cdot T + 1}} + \sum_{\substack{\vec{y} \in \cW:\\ v \in Tail(\vec{y}, j+1)\\ J(\vec{x},\vec{y}) \ge j}} \cP[Tail(\vec{y},j)] \cdot \cP_{y_{j\cdot T},\; y_{j \cdot T + 1}} \Biggr) \label{eq:qZ_start}
				\end{split}
			\end{equation}
		\end{small}
		To bound the first part, we will use \cref{lem:first_tail_segment_unlikely_hit_v}.
		Since $\vec{x}$ is good, we have
		\begin{align}
			\sum_{j=0}^{\lfloor \sqrt{n} \rfloor-1} \sum_{\substack{\vec{y} \in \cW:\\ v \in Tail(\vec{y}, j, j+1) \\ J(\vec{x},\vec{y}) \ge j}} \cP[Tail(\vec{y},j)] \cdot \cP_{y_{j\cdot T},\; y_{j \cdot T + 1}}
			&= \sum_{j=0}^{\lfloor \sqrt{n} \rfloor-1} \mbox{P}_{visit}(x_{jT},v, T) \explain{By definition of $\mbox{P}_{visit}$.}\\
			&\le T \sigma \,. \explain{By \cref{lem:first_tail_segment_unlikely_hit_v} since $\vec{x}$ is good.}\\ \label{eq:first_segment}
		\end{align}
		
		To bound the second part, we first partition according to the possible values of $ \vec{y}_{(j+1)T}$.
		We have
		\begin{align}
			\sum_{j=0}^{\lfloor \sqrt{n} \rfloor-1} \sum_{\substack{\vec{y} \in \cW:\\ v \in Tail(\vec{y}, j+1) \\ J(\vec{x},\vec{y}) \ge j}}  \cP[Tail(\vec{y},j)] \cdot \cP_{y_{j\cdot T},\; y_{j \cdot T + 1}}
			&= \sum_{j=0}^{\lfloor \sqrt{n} \rfloor-1} \sum_{u \in V} \mbox{P}_{end}(x_{jT}, u, T) \mbox{P}_{visit}(u, v, L-(j+1)T) \label{eq:q_second_part_analysis_1}
		\end{align}
		The visiting probability $\mbox{P}_{visit}(u,v,\ell)$ is  increasing in $\ell$, so continuing from   \cref{eq:q_second_part_analysis_1} we get
		\begin{align}
			& \sum_{j=0}^{\lfloor \sqrt{n} \rfloor-1} \sum_{\substack{\vec{y} \in \cW:\\ v \in Tail(\vec{y}, j+1) \\ J(\vec{x},\vec{y}) \ge j}}  \cP[Tail(\vec{y},j)] \cdot \cP_{y_{j\cdot T},\; y_{j \cdot T + 1}} \notag \\
			& \qquad \le \sum_{j=0}^{\lfloor \sqrt{n} \rfloor-1} \sum_{u \in V} \mbox{P}_{end}(x_{jT}, u, T) \mbox{P}_{visit}(u, v, L)  \\
			& \qquad \le \sum_{j=0}^{\lfloor \sqrt{n} \rfloor-1} \sum_{u \in V} \Bigl( \pi(u) + | \mbox{P}_{end}(x_{jT}, u, T) - \pi(u)|\Bigr) 
			\mbox{P}_{visit}(u, v, L), \label{eq:q_second_part_analysis_2}
		\end{align}
		where in \cref{eq:q_second_part_analysis_2} we used the inequality $a \leq b + |a-b|$ for $a,b \geq 0$. 
		
		A random walk with starting vertex drawn from $\pi$ has probability $\pi(v)$ of being at $v$ at each step.
		Formally, for all $\ell \in \mathbbm{N}$ we have
		\begin{align} \label{eq:uniform_start}
			\sum_{u \in V} \pi(u) \mbox{P}_{end}(u,v,\ell) = \pi(v)\,.
		\end{align}
		Additionally, by union bound we have
		\begin{align} \label{eq:union_bound}
			\mbox{P}_{visit}(u,v,L) \le \sum_{\ell = 1}^L \mbox{P}_{end}(u,v,\ell)\,.
		\end{align}
		We have 
		\begin{align}
			&  \sum_{j=0}^{\lfloor \sqrt{n} \rfloor-1} \sum_{u \in V} \left( \pi(u) + | \mbox{P}_{end}(x_{jT}, u, T) - \pi(u)|\right) 
			\mbox{P}_{visit}(u, v, L)  \\
			& \qquad  \le \sum_{j=0}^{\lfloor \sqrt{n} \rfloor-1} \left( \sum_{\ell = 1}^L \sum_{u \in V} \pi(u) \mbox{P}_{end}(u,v,\ell) + \sum_{u \in V} | \mbox{P}_{end}(x_{jT}, u, T) - \pi(u)| \cdot  \mbox{P}_{visit}(u,v,L) \right) \explain{By \cref{eq:union_bound}.} \\
			& \qquad = \sum_{j=0}^{\lfloor \sqrt{n} \rfloor-1} \left( L \pi(v) + \sum_{u \in V} |  \mbox{P}_{end}(x_{jT}, u, T) - \pi(u)| \cdot \mbox{P}_{visit}(u,v,L) \right)\,. \explain{By \cref{eq:uniform_start}} \\  \label{eq:474673373}
		\end{align} 
		
		Since $T = t_{mix}(\sigma/(2n))$, we have
		\begin{align}
			\sum_{u \in V} |  \mbox{P}_{end}(x_{jT}, u, T) - \pi(u)| \cdot \mbox{P}_{visit} (u,v,L)
			&\le \max_{u \in V} \mbox{P}_{visit} (u,v,L) \sum_{u \in V} |  \mbox{P}_{end}(x_{jT}, u, T) - \pi(u)| \notag \\
			&= \max_{u \in V} \mbox{P}_{visit} (u,v,L) \sum_{u \in V} |  (\cP^T)_{x_{jT},u} - \pi(u)|\notag \\
			&\leq  \max_{u \in V} \mbox{P}_{visit}(u,v,L) \cdot \frac{\sigma}{n} \,. \label{eq:1343333}
		\end{align}
		Combining  \cref{eq:474673373} and \cref{eq:1343333} yields 
		\begin{align}
			& \sum_{j=0}^{\lfloor \sqrt{n} \rfloor-1} \sum_{u \in V} \left( \pi(u) + | \mbox{P}_{end}(x_{jT}, u, T) - \pi(u)|\right) 
			\mbox{P}_{visit}(u, v, L)  \le \sum_{j=0}^{\lfloor \sqrt{n} \rfloor-1} \left( L \pi(v) + \frac{\sigma}{n} \max_{u \in V} \mbox{P}_{visit}(u,v,L) \right) \\
			&\qquad \le \lfloor \sqrt{n} \rfloor (L+1) \frac{\sigma}{n} \explain{Since $\mbox{P}_{visit}(u,v,L) \le 1$ and $\pi(v) \le \frac{\sigma}{n}$.}\\
			&\qquad  \le 2T \sigma \,.  \explain{Since $L = T \lfloor \sqrt{n} \rfloor$ and $1 \le L$.}\\ \label{eq:second_segment}
		\end{align}
		Combining \cref{eq:q_second_part_analysis_2} and \cref{eq:second_segment}, we obtain 
		\begin{align}
			\sum_{j=0}^{\lfloor \sqrt{n} \rfloor-1} \sum_{\substack{\vec{y} \in \cW:\\ v \in Tail(\vec{y}, j+1) \\ J(\vec{x},\vec{y}) \ge j}}  \cP[Tail(\vec{y},j)] \cdot \cP_{y_{j\cdot T},\; y_{j \cdot T + 1}}  \le 2T \sigma \,. \label{eq:83838383}
		\end{align}
		
		Combining \cref{eq:qZ_start}, \cref{eq:first_segment}, and \cref{eq:83838383}, 
		\begin{align}
			\widetilde{q}(\cZ,v)
			&\le 2 \sum_{\substack{g_{\vec{x},b_1} \in \cZ\\ good(\vec{x})}} \cP[\vec{x}] 3T\sigma
			= 6 T \sigma \sum_{\substack{g_{\vec{x},b_1} \in \cZ\\ good(\vec{x})}} \cP[\vec{x}]\,.  
			\label{eq:rho_bound}
		\end{align}
		Since $q(\cZ) = \min_{v \in V} \widetilde{q}(\cZ,v)$ and \cref{eq:rho_bound} holds for an arbitrary choice of $v$, we get 
		\begin{align}
			q(\cZ) \le 6T \sigma \sum_{\substack{g_{\vec{x},b_1} \in \cZ\\ good(\vec{x})}} \cP[\vec{x}] \,. \label{eq:q_bound}
		\end{align}
		
		\paragraph{Bounding ${M(\cZ)}/{q(\cZ)}$.}
		
		Combining \cref{eq:M_bound} and \cref{eq:q_bound}, we obtain 
		\begin{align} \label{eq:proto_bound}
			\frac{M(\cZ)}{q(\cZ)}
			\ge \frac{ 2^{-4\sigma} \lfloor \sqrt{n} \rfloor \sum_{\substack{g_{\vec{x},b_1} \in \cZ\\ good(\vec{x})}} \cP[\vec{x}]}{6 T \sigma \sum_{\substack{g_{\vec{x},b_1} \in \cZ\\ good(\vec{x})}} \cP[\vec{x}]}
			= \frac{2^{-4\sigma}}{6\sigma} \cdot \frac{\lfloor \sqrt{n} \rfloor}{T}\,.
		\end{align}
		We have $2^{-4\sigma}/(6\sigma) \in \Omega(1/exp(3\sigma))$.
		Thus applying  \cref{thm:relational_adversary} to the expression in \cref{eq:proto_bound}, we obtain the required lower bound for local search on $G$, namely \[
		\Omega\left( \frac{\sqrt{n}}{t_{mix}\left(\frac{\sigma}{2n}\right)    \cdot \exp(3\sigma)}\right)\,.
		\]
		This completes the proof.
	\end{proof}
	
	\section*{Acknowledgements}
	We would like to thank Davin Choo for several useful discussions.
	
	\bibliographystyle{alpha}
	\bibliography{bibliography}

\newcommand{\etalchar}[1]{$^{#1}$}
\begin{thebibliography}{BDPX09}

\bibitem[Aar06]{Aaronson06}
Scott Aaronson.
\newblock Lower bounds for local search by quantum arguments.
\newblock {\em {SIAM} J. Comput.}, 35(4):804--824, 2006.

\bibitem[Ald83]{aldous1983minimization}
David Aldous.
\newblock Minimization algorithms and random walk on the $ d $-cube.
\newblock {\em The Annals of Probability}, 11(2):403--413, 1983.

\bibitem[Ama13]{amari_book}
Shun-ichi Amari.
\newblock Information geometry and its applications: Survey.
\newblock In Frank Nielsen and Fr{\'e}d{\'e}ric Barbaresco, editors, {\em
  Geometric Science of Information}, pages 3--3, Berlin, Heidelberg, 2013.
  Springer Berlin Heidelberg.

\bibitem[BCR24]{BCR23}
Simina Br{\^a}nzei, Davin Choo, and Nicholas Recker.
\newblock The sharp power law of local search on expanders.
\newblock In {\em Proceedings of the ACM-SIAM Symposium on Discrete Algorithms
  (SODA)}, 2024.
\newblock https://arxiv.org/abs/2305.08269.

\bibitem[BDN19]{babichenko2019communication}
Yakov Babichenko, Shahar Dobzinski, and Noam Nisan.
\newblock The communication complexity of local search.
\newblock In {\em Proceedings of the 51st Annual ACM SIGACT Symposium on Theory
  of Computing}, pages 650--661, 2019.

\bibitem[BDPX05]{boyd2005symmetry}
Stephen Boyd, Persi Diaconis, Pablo Parrilo, and Lin Xiao.
\newblock Symmetry analysis of reversible markov chains.
\newblock {\em Internet Mathematics}, 2(1):31--71, 2005.

\bibitem[BDPX09]{BDPX09}
S.~Boyd, P.~Diaconis, P.~Parrilo, and L.~Xiao.
\newblock Fastest mixing markov chain on graphs with symmetries.
\newblock {\em SIAM J. Optimization}, 20:792--819, 2009.

\bibitem[BDX04]{boyd2004fastest}
Stephen Boyd, Persi Diaconis, and Lin Xiao.
\newblock Fastest mixing markov chain on a graph.
\newblock {\em SIAM review}, 46(4):667--689, 2004.

\bibitem[BL22]{BL22}
Simina Br{\^a}nzei and Jiawei Li.
\newblock The query complexity of local search and brouwer in rounds.
\newblock In {\em Conference on Learning Theory}, pages 5128--5145. PMLR, 2022.

\bibitem[BM20]{bubeck2020trap}
S{\'e}bastien Bubeck and Dan Mikulincer.
\newblock How to trap a gradient flow.
\newblock In {\em Conference on Learning Theory}, pages 940--960. PMLR, 2020.

\bibitem[Bon13]{bonnabel2013stochastic}
Silvere Bonnabel.
\newblock Stochastic gradient descent on riemannian manifolds.
\newblock {\em IEEE Transactions on Automatic Control}, 58(9):2217--2229, 2013.

\bibitem[CD05]{chen2005algorithms}
Xi~Chen and Xiaotie Deng.
\newblock On algorithms for discrete and approximate brouwer fixed points.
\newblock In {\em Proceedings of the thirty-seventh annual ACM symposium on
  Theory of computing}, pages 323--330, 2005.

\bibitem[CD09]{CD09}
Xi~Chen and Xiaotie Deng.
\newblock On the complexity of 2d discrete fixed point problem.
\newblock {\em Theor. Comput. Sci.}, 410(44):4448--4456, 2009.

\bibitem[CDHS20]{stationary_I}
Yair Carmon, John~C. Duchi, Oliver Hinder, and Aaron Sidford.
\newblock Lower bounds for finding stationary points {I}.
\newblock {\em Math. Program.}, 184(1):71--120, 2020.

\bibitem[CDHS21]{stationary_II}
Yair Carmon, John~C. Duchi, Oliver Hinder, and Aaron Sidford.
\newblock Lower bounds for finding stationary points {II:} first-order methods.
\newblock {\em Math. Program.}, 185(1-2):315--355, 2021.

\bibitem[CDT09]{chen2009settling}
Xi~Chen, Xiaotie Deng, and Shang-Hua Teng.
\newblock Settling the complexity of computing two-player nash equilibria.
\newblock {\em Journal of the ACM (JACM)}, 56(3):1--57, 2009.

\bibitem[Chu16]{chuzhoy2016routing}
Julia Chuzhoy.
\newblock Routing in undirected graphs with constant congestion.
\newblock {\em SIAM Journal on Computing}, 45(4):1490--1532, 2016.

\bibitem[CPY17]{chen2017complexity}
Xi~Chen, Dimitris Paparas, and Mihalis Yannakakis.
\newblock The complexity of non-monotone markets.
\newblock {\em Journal of the ACM (JACM)}, 64(3):1--56, 2017.

\bibitem[CT07]{chen2007paths}
Xi~Chen and Shang-Hua Teng.
\newblock Paths beyond local search: A tight bound for randomized fixed-point
  computation.
\newblock In {\em 48th Annual IEEE Symposium on Foundations of Computer Science
  (FOCS'07)}, pages 124--134. IEEE, 2007.

\bibitem[DGP09]{daskalakis2009complexity}
Constantinos Daskalakis, Paul~W Goldberg, and Christos~H Papadimitriou.
\newblock The complexity of computing a nash equilibrium.
\newblock {\em Communications of the ACM}, 52(2):89--97, 2009.

\bibitem[DP11]{daskalakis2011continuous}
Constantinos Daskalakis and Christos Papadimitriou.
\newblock Continuous local search.
\newblock In {\em Proceedings of the twenty-second annual ACM-SIAM symposium on
  Discrete algorithms}, pages 790--804. SIAM, 2011.

\bibitem[DR10]{dinh2010quantum}
Hang Dinh and Alexander Russell.
\newblock Quantum and randomized lower bounds for local search on
  vertex-transitive graphs.
\newblock {\em Quantum Information \& Computation}, 10(7):636--652, 2010.

\bibitem[DS20]{pmlr-v119-drori20a}
Yoel Drori and Ohad Shamir.
\newblock The complexity of finding stationary points with stochastic gradient
  descent.
\newblock In Hal~Daumé III and Aarti Singh, editors, {\em Proceedings of the
  37th International Conference on Machine Learning}, volume 119 of {\em
  Proceedings of Machine Learning Research}, pages 2658--2667. PMLR, 13--18 Jul
  2020.

\bibitem[DSZ21]{costis_minmax}
Constantinos Daskalakis, Stratis Skoulakis, and Manolis Zampetakis.
\newblock The complexity of constrained min-max optimization.
\newblock In {\em Proceedings of the 53rd Annual ACM SIGACT Symposium on Theory
  of Computing}, page 1466–1478, 2021.

\bibitem[FGHS22]{fearnley2022complexity}
John Fearnley, Paul Goldberg, Alexandros Hollender, and Rahul Savani.
\newblock {The complexity of gradient descent: $CLS = PPAD \cap PLS$}.
\newblock {\em Journal of the ACM}, 70(1):1--74, 2022.

\bibitem[HPV89]{hirsch1989exponential}
Michael~D Hirsch, Christos~H Papadimitriou, and Stephen~A Vavasis.
\newblock Exponential lower bounds for finding brouwer fix points.
\newblock {\em Journal of Complexity}, 5(4):379--416, 1989.

\bibitem[JPY88]{DBLP:journals/jcss/JohnsonPY88}
David~S. Johnson, Christos~H. Papadimitriou, and Mihalis Yannakakis.
\newblock How easy is local search?
\newblock {\em J. Comput. Syst. Sci.}, 37(1):79--100, 1988.

\bibitem[LP17]{levin2017markov}
David~A Levin and Yuval Peres.
\newblock {\em Markov chains and mixing times}.
\newblock American Mathematical Society; 2nd Revised edition, 2017.

\bibitem[LTT89]{llewellyn1989local}
Donna~Crystal Llewellyn, Craig Tovey, and Michael Trick.
\newblock Local optimization on graphs.
\newblock {\em Discrete Applied Mathematics}, 23(2):157--178, 1989.

\bibitem[OTZ21]{olesker2021geometric}
Sam Olesker-Taylor and Luca Zanetti.
\newblock Geometric bounds on the fastest mixing markov chain.
\newblock {\em arXiv preprint arXiv:2111.05816}, 2021.

\bibitem[Pap94]{Papadimitriou_1994}
Christos~H. Papadimitriou.
\newblock On the complexity of the parity argument and other inefficient proofs
  of existence.
\newblock {\em J. Comput. Syst. Sci.}, 48(3):498--532, 1994.

\bibitem[SS04]{santha2004quantum}
Miklos Santha and Mario Szegedy.
\newblock Quantum and classical query complexities of local search are
  polynomially related.
\newblock In {\em Proceedings of the thirty-sixth annual ACM symposium on
  Theory of computing}, pages 494--501, 2004.

\bibitem[SY09]{sun2009quantum}
Xiaoming Sun and Andrew Chi-Chih Yao.
\newblock On the quantum query complexity of local search in two and three
  dimensions.
\newblock {\em Algorithmica}, 55(3):576--600, 2009.

\bibitem[Tov81]{tovey81}
Craig Tovey.
\newblock Polynomial local improvement algorithms in combinatorial
  optimization, 1981.
\newblock Ph.D. thesis, Stanford University.

\bibitem[Vav93]{vavasis1993black}
Stephen~A Vavasis.
\newblock Black-box complexity of local minimization.
\newblock {\em SIAM Journal on Optimization}, 3(1):60--80, 1993.

\bibitem[Ver06]{Verhoeven06}
Yves~F. Verhoeven.
\newblock Enhanced algorithms for local search.
\newblock {\em Information Processing Letters}, 97(5):171--176, 2006.

\bibitem[VY11]{vazirani2011market}
Vijay~V Vazirani and Mihalis Yannakakis.
\newblock Market equilibrium under separable, piecewise-linear, concave
  utilities.
\newblock {\em Journal of the ACM (JACM)}, 58(3):1--25, 2011.

\bibitem[Zha09]{zhang2009tight}
Shengyu Zhang.
\newblock Tight bounds for randomized and quantum local search.
\newblock {\em SIAM Journal on Computing}, 39(3):948--977, 2009.

\bibitem[ZLJ{\etalchar{+}}20]{pmlr-v119-zhang20p}
Jingzhao Zhang, Hongzhou Lin, Stefanie Jegelka, Suvrit Sra, and Ali Jadbabaie.
\newblock Complexity of finding stationary points of nonconvex nonsmooth
  functions.
\newblock In Hal~Daumé III and Aarti Singh, editors, {\em Proceedings of the
  37th International Conference on Machine Learning}, volume 119 of {\em
  Proceedings of Machine Learning Research}, pages 11173--11182. PMLR, 13--18
  Jul 2020.

\end{thebibliography}
	
	\newpage 
	
	\appendix
	\section{Helper lemmas} \label{sec:helper_lemmas}

	\begin{mylemma} \label{lem:proof_f_X_is_valid}
		For each walk  $\X = (x_0, \ldots, x_{L}) \in \cW$, the function $f_{\X}$ given by  Definition~\ref{def:f_x_and_g_x_b} is valid for the walk ${\X}$.
	\end{mylemma}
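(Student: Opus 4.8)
The plan is to verify directly the three conditions in \cref{def:valid_function} from the explicit formula for $f_{\X}$ given in \cref{def:f_x_and_g_x_b}, using that $x_0$ is the vertex $1$ (by \cref{def:W_and_T}) and that $G$ is connected. The three conditions are independent and can be checked one at a time; none of them requires anything beyond unwinding definitions.

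First I would handle condition (2). If $v \in V \setminus \X$, then by definition $f_{\X}(v) = dist(v,1)$. Since $G$ is undirected, $dist(v,1) = dist(1,v) = dist(x_0,v)$. Moreover $x_0 \in \X$ while $v \notin \X$, so $v \neq x_0$, and connectivity of $G$ gives $0 < dist(x_0,v) < \infty$. This is exactly condition (2).

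Next, condition (3). For any $i \in \{0, \ldots, L\}$ the vertex $x_i$ lies on $\X$, so $f_{\X}(x_i) = -\max\{j \in \{0,\ldots,L\} \mid x_j = x_i\}$. The index set $\{j \mid x_j = x_i\}$ contains $i \ge 0$, hence its maximum is $\ge 0$, and therefore $f_{\X}(x_i) \le 0$.

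Finally, condition (1). Take $u,v \in \X$ and set $a = \max\{j \mid x_j = v\}$ and $b = \max\{j \mid x_j = u\}$, their respective last-occurrence indices along $\X$. The hypothesis of condition (1) is precisely $a < b$. By definition $f_{\X}(v) = -a$ and $f_{\X}(u) = -b$, so $a < b$ yields $f_{\X}(v) = -a > -b = f_{\X}(u)$, as required. I do not anticipate any genuine obstacle here: the lemma is a bookkeeping check that the staircase construction behaves as intended, and the only places needing minor care are invoking symmetry of $dist$ and connectivity of $G$ for the strict positivity and finiteness in condition (2), and noting that $v \notin \X$ forces $v \neq x_0$.
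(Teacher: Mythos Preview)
Your proposal is correct and follows essentially the same approach as the paper: both verify the three conditions of \cref{def:valid_function} directly from the formula in \cref{def:f_x_and_g_x_b}, using that $x_0 = 1$ for walks in $\cW$. The only differences are cosmetic (you check the conditions in the order 2, 3, 1 and spell out connectivity and $v \neq x_0$ a bit more explicitly), so there is nothing substantive to add.
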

	\begin{proof}
		We show  the  conditions required by the definition of a valid function (\cref{def:valid_function}) hold.
		
		\begin{description}
			\item[Condition 1.] 
			Consider two arbitrary vertices $v_1,v_2 \in {\X}$. Define 
			\[
			i_1 = \max\bigl\{k \in \{0, \ldots, L\} \mid v_1 = x_k \bigr\}\qquad \mbox{and} \qquad i_2 = \max\bigl\{k \in \{0, \ldots, L\} \mid v_2 = x_k\bigr\} \,.
			\]
			Without loss of generality, we have $i_1 \leq i_2$. Condition $1$ requires that if $i_1 < i_2$, then $f_{\vec{x}}(v_1) > f_{\vec{x}}(v_2)$. 
			Since $v_1, v_2 \in \vec{x}$, \cref{def:f_x_and_g_x_b} of  the  function $f_{\vec{x}}$ states that  $f_{\vec{x}}(v_1) = -i_1$ and $f_{\vec{x}}(v_2) = -i_2$. Thus if $i_1 < i_2$, then $f_{\vec{x}}(v_1) > f_{\vec{x}}(v_2)$, as required. 
			
			\item[Condition 2.] The second condition requires that  $f_{\vec{x}}(v) = dist(x_0, v) > 0$ for all $v \in V \setminus \vec{x}$.
			Since $\vec{x} \in \cW$, we have $x_0 = 1$. Using  \cref{def:f_x_and_g_x_b} we have $f_{\vec{x}}(v) = dist(v,1) = dist(v,x_0) > 0$ for all $v \not \in \vec{x}$.
			
			\item[Condition 3.] The third condition requires that $f_{\vec{x}}(x_i) \leq 0$ for all $i \in \{0, \ldots, L\}$. This condition is clearly met since
			$f_{\vec{x}}(v) \in \{0, -1, \ldots, -L\}$ for all $v \in \vec{x}$. 
		\end{description}
		Therefore $f_{\X}$ is valid for the walk ${\X}$.
	\end{proof}

	\begin{lemma} \label{lem:mixing_time_sigma}
		Let $(w_0, w_1, \ldots)$ be a Markov chain generated by transition matrix $\cP$ with arbitrary starting distribution.
		Then for all $v \in V$ we have
		\begin{align*}
			\Pr[w_{t_{mix}(\sigma/(2n))} = v] \le 2\sigma/n \,.
		\end{align*}
	\end{lemma}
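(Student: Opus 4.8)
The plan is to reduce to the single-start case, bound the deviation from $\pi$ at each vertex using the definition of mixing time, and separately show that $\pi$ itself is pointwise at most $\sigma/n$; combining these two bounds gives $2\sigma/n$.

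First I would observe that it suffices to prove the claim when the chain starts deterministically at a single vertex $u$. Indeed, if $\mu$ is the (arbitrary) starting distribution, then $\Pr[w_T = v] = \sum_{u \in V} \mu(u)\,(\cP^T)_{u,v}$ where $T = t_{mix}(\sigma/(2n))$, so a uniform bound $(\cP^T)_{u,v} \le 2\sigma/n$ over all $u$ immediately yields $\Pr[w_T = v] \le 2\sigma/n$ since $\sum_u \mu(u) = 1$.

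Next, fix $u \in V$. By the definition of $t_{mix}(\epsilon)$ with $\epsilon = \sigma/(2n)$, we have $\tfrac12 \sum_{w \in V} |(\cP^T)_{u,w} - \pi(w)| \le \sigma/(2n)$, and in particular the single term $|(\cP^T)_{u,v} - \pi(v)| \le \sigma/n$, so $(\cP^T)_{u,v} \le \pi(v) + \sigma/n$. It then remains to bound $\pi(v)$. Here I would use the definition of $\sigma = \max_{u',v'} \pi(v')/\pi(u')$: for every $w \in V$ we have $\pi(v) \le \sigma\,\pi(w)$, and summing this over all $w \in V$ gives $n\,\pi(v) \le \sigma \sum_{w} \pi(w) = \sigma$, i.e. $\pi(v) \le \sigma/n$. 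Plugging this in yields $(\cP^T)_{u,v} \le \pi(v) + \sigma/n \le 2\sigma/n$, which completes the argument by the reduction above.

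There is essentially no hard step here; the only thing to notice is the elementary bound $\pi(v) \le \sigma/n$, which follows purely from the definition of $\sigma$ and the fact that $\pi$ is a probability distribution (laziness, irreducibility, and reversibility are used only to guarantee that $\pi$ is the well-defined unique stationary distribution and that $t_{mix}$ is finite). The remaining content is just unwinding the definition of mixing time and the triangle-style inequality $|(\cP^T)_{u,v}-\pi(v)|\le \sigma/n$.
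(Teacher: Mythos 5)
Your proof is correct and follows essentially the same approach as the paper: unwind the definition of $t_{mix}(\sigma/(2n))$ to get $|(\cP^T)_{u,v} - \pi(v)| \le \sigma/n$, observe $\pi(v) \le \sigma/n$ from the definition of $\sigma$, and add. Your version is slightly more explicit than the paper's in reducing the arbitrary starting distribution to point masses by convexity, a step the paper passes over silently.
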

	\begin{proof}
		By definition of $t_{mix}(\sigma/(2n))$ we have
		\begin{align} \label{eq:mixing_time_sigma_1}
			\sum_{v \in V} \left|\Pr[w_{t_{mix}(\sigma/(2n))} = v] - \pi(v)\right| \le \sigma/n\,.
		\end{align}
		
		Since each term of the sum in \cref{eq:mixing_time_sigma_1} is non-negative, we have
		
		\begin{align} \label{eq:mixing_time_sigma_2}
			\forall v \in V\; \left|\Pr[w_{t_{mix}(\sigma/(2n))} = v] - \pi(v)\right| \le \sigma/n \,.
		\end{align}
		
		Removing the absolute value from \cref{eq:mixing_time_sigma_2} and rearranging, we get
		
		\begin{align} \label{eq:mixing_time_sigma_3}
			\forall v \in V\; \Pr[w_{t_{mix}(\sigma/(2n))} = v] &\le \sigma/n + \pi(v) \notag \\
			&\le 2\sigma/n \,. \explain{Since $\sigma \ge \pi(v) \cdot n$.}
		\end{align}
		
		This concludes the proof of the lemma.
	\end{proof}
	
	\begin{lemma} \label{lem:first_tail_segment_unlikely_hit_v}
		Let $S \subseteq V$ be a subset of vertices.
		Let $v \in V$ be a vertex and 
		$\ell \in \mathbbm{N}$.
		Then
		\begin{align*}
			\sum_{u \in S} \mbox{\em P}_{visit}(u,v,\ell) \le \ell\sigma \,.
		\end{align*}
	\end{lemma}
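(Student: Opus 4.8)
The plan is to reduce the visiting probability to a sum of single-step occupation probabilities via a union bound, and then exploit reversibility to swap the role of $u$ and $v$.

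First, fix the target $v$ and a length $\ell$. For any start $u$, the length-$\ell$ walk visits $v$ iff it is at $v$ at some time step, so by the union bound $\mbox{P}_{visit}(u,v,\ell) \le \sum_{t=1}^{\ell} \mbox{P}_{end}(u,v,t) = \sum_{t=1}^{\ell} (\cP^t)_{u,v}$; this is exactly \cref{eq:union_bound} with length $\ell$. Summing over $u \in S$ and then enlarging $S$ to all of $V$ (every term is nonnegative),
\[
\sum_{u \in S} \mbox{P}_{visit}(u,v,\ell) \;\le\; \sum_{t=1}^{\ell} \sum_{u \in V} (\cP^t)_{u,v}\,.
\]

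The key step is to show $\sum_{u \in V} (\cP^t)_{u,v} \le \sigma$ for every fixed $t \ge 1$. Reversibility of $\cP$ is inherited by all of its powers: a short induction on $t$ (using $\pi(u)\cP_{u,v} = \pi(v)\cP_{v,u}$ at the inductive step) gives $\pi(u)(\cP^t)_{u,v} = \pi(v)(\cP^t)_{v,u}$ for all $u,v \in V$. Hence $(\cP^t)_{u,v} = \frac{\pi(v)}{\pi(u)}(\cP^t)_{v,u} \le \sigma\,(\cP^t)_{v,u}$, and summing over $u$,
\[
\sum_{u \in V} (\cP^t)_{u,v} \;\le\; \sigma \sum_{u \in V} (\cP^t)_{v,u} \;=\; \sigma\,,
\]
since $\cP^t$ is stochastic. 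Plugging this back in yields $\sum_{u \in S}\mbox{P}_{visit}(u,v,\ell) \le \sum_{t=1}^{\ell}\sigma = \ell\sigma$, as claimed.

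There is no real obstacle; the only point to pin down is the convention for a ``length-$\ell$ walk'' (whether the starting vertex counts as a visit). Under the convention used in the main proof the union bound ranges over $t \in \{1,\dots,\ell\}$, giving exactly $\ell\sigma$; even with an extra $t=0$ term one has $\sum_{u}(\cP^0)_{u,v} = 1 \le \sigma$, so at worst the bound degrades to $(\ell+1)\sigma$, which is harmless. Note also that $S$ plays no role beyond the trivial monotonicity $\sum_{u\in S}\le\sum_{u\in V}$, so the lemma holds uniformly over all subsets $S$.
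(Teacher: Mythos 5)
Correct, and it is essentially the same argument as the paper's: bound $\mbox{P}_{visit}$ by the expected number of visits $\sum_{t=1}^{\ell}(\cP^t)_{u,v}=\mbox{E}_{visit}(u,v,\ell)$, invoke reversibility of the powers of $\cP$ to swap $u$ and $v$ at multiplicative cost $\pi(v)/\pi(u)\le\sigma$ (the paper packages this as Lemma~\ref{lem:reversibility_markov_chain}), and conclude since a walk of length $\ell$ makes at most $\ell$ visits in total. The only cosmetic difference is that you enlarge $S$ to all of $V$ and sum over time steps to get $\sum_{u\in V}(\cP^t)_{v,u}=1$, whereas the paper keeps the sum over $S$ and observes $\Ex[T_\ell]\le\ell$; these are the same computation.
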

	\begin{proof}
		
		Let $T_\ell$ be the random variable representing the number of times a random walk of length $\ell$ starting at $v$ visits a vertex in $S$.
		Decomposing by vertices in $S$, we have
		\begin{align}
			\sum_{u \in S} \mbox{P}_{visit}(u,v,\ell) &\le \sum_{u \in S} \mbox{E}_{visit}(u,v,\ell) \\
			&= \sum_{u \in S} \mbox{E}_{visit}(v,u,\ell) \frac{\pi(v)}{\pi(u)}\,. \explain{By \cref{lem:reversibility_markov_chain}}\\
			\label{eq:bounding_sum_of_p_visit_by_expected_visit_times_ratio_fractions}
		\end{align}
		Using the definition of $\sigma = \max_{u,v \in V} {\pi(v)}/{ \pi(u)}$, we have 
		\begin{align}
			\sum_{u \in S} \mbox{E}_{visit}(v,u,\ell) \frac{\pi(v)}{\pi(u)} &\le \sigma\sum_{u \in S} \mbox{E}_{visit}(v,u,\ell) \notag \\
			&= {\sigma  \cdot \Ex[T_\ell]} \explain{By definition of $T_\ell$.}\\
			&\le \ell \sigma\,.  \explain{Since $T_\ell \le \ell$.} \\
			\label{eq:bounding_sum_of_expected_visits_over_S}
		\end{align}
		Combining \cref{eq:bounding_sum_of_p_visit_by_expected_visit_times_ratio_fractions} and \cref{eq:bounding_sum_of_expected_visits_over_S}, we get $\sum_{u \in S} \mbox{P}_{visit}(u,v,\ell) \leq \ell \sigma$, as required.
	\end{proof}
	
	\begin{lemma} \label{lem:y_counting_for_MZ}
		
		Let $n \ge 16 \sigma^2$.
		Fix a good walk $\vec{x} = (x_0, \ldots, x_L)$ with $x_0 = 1$ and $\cP_{x_i, x_{i+1}} > 0$ for all $0 \le i < L$.
		Then for each $0 \le j < \lfloor \sqrt{n} \rfloor$, 
		\begin{align}
			\sum_{\substack{\vec{y} \in \cW:\\ J(\vec{x},\vec{y}) = j\\ good(\vec{y})}} \cP[Tail(\vec{y},j)] \cdot \cP_{y_{j\cdot T},\; y_{j \cdot T + 1}} \ge 2^{-4\sigma} \,. \label{eq:target_inequality_collision_J_x_y_j}
		\end{align}
	\end{lemma}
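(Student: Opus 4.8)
The plan is to interpret the left-hand side of \cref{eq:target_inequality_collision_J_x_y_j} probabilistically. Consider the random walk $\vec{z} = (z_0, z_1, \ldots, z_{(\lfloor\sqrt n\rfloor - j)T})$ started at $z_0 = x_{jT}$ and evolving according to $\cP$; extend $\vec{y}$ by prepending $Head(\vec{x}, j)$, so that $\vec{y} \in \cW$ corresponds bijectively to such a $\vec{z}$, and $\cP[Tail(\vec{y},j)]\cdot \cP_{y_{jT}, y_{jT+1}}$ is exactly the probability mass that the walk $\vec{z}$ follows the trajectory $(z_1, \ldots)$. Under this identification, the sum in \cref{eq:target_inequality_collision_J_x_y_j} equals $\Pr[\,J(\vec{x},\vec{y}) = j \text{ and } good(\vec{y})\,]$, the probability that the continuation of $\vec{x}$ from its $j$-th milestone both (a) diverges from $\vec{x}$ at exactly milestone $j$ — i.e. $z_T \ne x_{(j+1)T}$ — and (b) produces no repeated milestone anywhere along the full walk. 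So it suffices to show this probability is at least $2^{-4\sigma}$.

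The approach is a union bound over the two bad events. For (a): the event $J(\vec{x},\vec{y}) \ge j+1$ means $z_T = x_{(j+1)T}$, which occurs with probability $\mbox{P}_{end}(x_{jT}, x_{(j+1)T}, T) \le 2\sigma/n$ by \cref{lem:mixing_time_sigma} (since $T = t_{mix}(\sigma/(2n))$). For (b): $\vec{y}$ is bad when two of its $\lfloor\sqrt n\rfloor + 1$ milestones coincide. Split into collisions among milestones $0, \ldots, j$ (these are the milestones of $\vec{x}$, which cannot collide since $\vec{x}$ is good) versus collisions involving at least one milestone among $j+1, \ldots, \lfloor\sqrt n\rfloor$. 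For the latter, fix any milestone index $k \ge j+1$; the milestone $y_{kT} = z_{(k-j)T}$ is the endpoint of a walk of length $(k-j)T \ge T = t_{mix}(\sigma/(2n))$ started from $x_{jT}$, so by \cref{lem:mixing_time_sigma} it lands on any fixed vertex with probability at most $2\sigma/n$; union-bounding over all $\le \binom{\lfloor\sqrt n\rfloor+1}{2} \le n/2$ ordered-or-unordered pairs of milestones (at least one of which is $\ge j+1$, so we may condition on everything up to the earlier of the two and apply the mixing bound to the later one) gives a failure probability at most $\frac n2 \cdot \frac{2\sigma}{n} = \sigma$ — which is useless. To fix the constant I would instead only pay for collisions of the form $(y_{kT}, y_{k'T})$ by summing $\sum_{j+1 \le k < k' \le \lfloor\sqrt n\rfloor} \Pr[y_{kT} = y_{k'T}]$, bounding each term by $\max_v \Pr[z_{(k'-j)T} = v \mid z_{(k-j)T}] \le 2\sigma/n$, for a total of at most $\binom{\lfloor\sqrt n\rfloor}{2}\cdot \frac{2\sigma}{n} \le \sigma$; combined with the need for a $2^{-4\sigma}$-type bound this suggests the intended argument is multiplicative rather than additive.

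The cleaner route, which I expect matches the intended proof, is to build $\vec{y}$ milestone-by-milestone and bound the conditional probability that each new milestone avoids all previous ones and (for the first new milestone) avoids $x_{(j+1)T}$. At step $m$ (choosing $y_{(j+m)T}$ given the walk so far, for $m = 1, \ldots, \lfloor\sqrt n\rfloor - j$), the new milestone is the endpoint of a length-$T$ walk from the previous milestone, so by \cref{lem:mixing_time_sigma} it hits any particular forbidden vertex with probability $\le 2\sigma/n$; there are at most $j + m \le \lfloor\sqrt n\rfloor$ forbidden vertices (the $\le \lfloor\sqrt n\rfloor$ earlier milestones, plus $x_{(j+1)T}$ when $m=1$), so the conditional success probability is at least $1 - \lfloor\sqrt n\rfloor \cdot \frac{2\sigma}{n} \ge 1 - \frac{2\sigma}{\sqrt n} \ge 1 - \frac{2\sigma}{4\sigma} = \frac12$ using $n \ge 16\sigma^2$. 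Multiplying over the at most $\lfloor\sqrt n\rfloor$ steps would give only $2^{-\sqrt n}$, so the exponent must come from a sharper accounting — namely $\prod_{m}(1 - \lfloor\sqrt n\rfloor\cdot 2\sigma/n) \ge \prod_m(1 - 2\sigma/\sqrt n) \ge \exp(-4\sigma/\sqrt n \cdot \sqrt n) = \exp(-4\sigma) \ge 2^{-4\sigma}$, using $1 - x \ge e^{-2x}$ for $x \le 1/2$ and summing the per-step loss $2\sigma/\sqrt n$ over $\le \sqrt n$ steps to get a total exponent $\le 4\sigma$. This telescoping of the small per-step error into a bounded total is the crux. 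The main obstacle is making the conditioning rigorous: \cref{lem:mixing_time_sigma} must be applied to the endpoint of a length-$T$ sub-walk conditioned on its (adversarially determined) starting milestone, which is fine because the lemma holds for an arbitrary starting distribution, but one must be careful that "good" refers to the full walk of $\lfloor\sqrt n\rfloor+1$ milestones, that milestones $0,\ldots,j$ are already distinct and fixed by $\vec{x}$ being good, and that the event $J(\vec{x},\vec{y}) = j$ is captured exactly by forbidding $x_{(j+1)T}$ at the first new milestone (a later milestone of $\vec{y}$ can never re-coincide with $x_{(j+1)T}$ in a way that increases $J$, since $J$ only counts an unbroken shared prefix).
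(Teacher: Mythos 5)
Your probabilistic reinterpretation, the milestone-by-milestone conditional construction, and the per-step mixing bound via \cref{lem:mixing_time_sigma} are exactly the paper's argument; you correctly discarded the additive union bound and identified that the telescoping multiplicative accounting is the crux. However, your final numerical step is wrong: from $1 - x \ge e^{-2x}$ on $[0,1/2]$ you correctly obtain $(1 - 2\sigma/\sqrt{n})^{\sqrt{n}} \ge \exp(-4\sigma)$, but then assert $\exp(-4\sigma) \ge 2^{-4\sigma}$, which is false since $e > 2$ implies $e^{-4\sigma} < 2^{-4\sigma}$ for $\sigma > 0$. As written, your chain only proves a lower bound of $e^{-4\sigma}$, which is strictly weaker than the claimed $2^{-4\sigma}$ and would degrade the constant in the exponent of the final theorem. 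One clean repair that preserves your structure is to replace $1-x \ge e^{-2x}$ by $1 - x \ge 4^{-x}$ for $x \in [0,1/2]$ (equality at both endpoints, concavity argument in between), giving $(1 - 2\sigma/\sqrt{n})^{\sqrt{n}} \ge 4^{-2\sigma} = 2^{-4\sigma}$ directly. The paper instead invokes \cref{lem:derivative} to show that $t \mapsto (1 - 2\sigma/t)^t$ is nondecreasing for $t \ge 4\sigma$, so the expression is minimized at $\sqrt{n} = 4\sigma$ where it equals exactly $(1/2)^{4\sigma} = 2^{-4\sigma}$; both fixes are essentially the same fact stated differently.

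A second, smaller issue: your forbidden-set size bound ``$j + m \le \lfloor\sqrt n\rfloor$'' undercounts by one at $m = 1$ (the forbidden set $S_j \cup \{x_{(j+1)T}\}$ has $j+2$ elements), so in the boundary case $j = \lfloor\sqrt n\rfloor - 1$ the forbidden set has $\lfloor\sqrt n\rfloor + 1$ elements, exceeding $\lfloor\sqrt n\rfloor$. The paper handles this by treating $j = \lfloor\sqrt n\rfloor - 1$ as a separate case, where there is only a single factor and the slack is absorbed since $1 - 2\sigma(\lfloor\sqrt n\rfloor + 1)/n \ge 1/4 \ge 2^{-4\sigma}$ using $\sigma \ge 1$. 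You should add this case split, or else verify that the $+1$ does not break the product bound for the remaining values of $j$ (it does not, since for $j < \lfloor\sqrt n\rfloor - 1$ every $|Q_i| \le \lfloor\sqrt n\rfloor$).
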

	\begin{proof}
		
		Let $P_\cW$ be the distribution over the set of walks  $\cW$ generated by sampling a walk according to $\cP$ starting at the vertex $1$ with $L= \lfloor \sqrt{n} \rfloor \cdot T$ edges. Let $\vec{z}$ be a random walk drawn from $P_\cW$. Recall that every $T$-th vertex of a walk is called a milestone, and a walk is good if it does not repeat milestones. 
		We have
		\begin{align} \label{eq:y_counting_for_MZ_0}
			\sum_{\substack{\vec{y} \in \cW:\\ J(\vec{x},\vec{y}) = j\\ good(\vec{y})}} \cP[Tail(\vec{y},j)] \cdot \cP_{y_{j\cdot T},\; y_{j \cdot T + 1}}   & = \sum_{\substack{\vec{y} \in \cW:\\ J(\vec{x},\vec{y}) = j\\ good(\vec{y})}}  \cP_{y_{j\cdot T},\; y_{j \cdot T + 1}} \cdot \cP_{y_{j \cdot T+1}, y_{j \cdot T+2}} \cdot \ldots \cdot \cP_{y_{L-1}, y_{L}} \\ 
			%
			&= \Pr\Bigl[good(\vec{z}) \wedge J(\vec{x},\vec{z}) = j \;|\; Head(\vec{x}, j) = Head(\vec{z}, j)\Bigr]\,.
		\end{align}
		
		\begin{figure}[h!]
			\centering
			\begin{minipage}[t]{0.49\textwidth}
				\includegraphics[width=\textwidth]{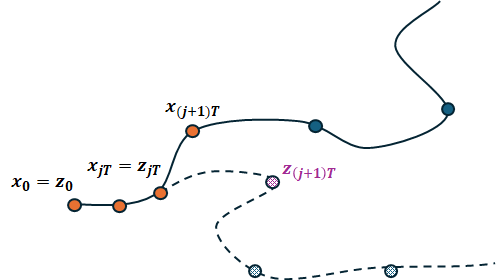}
				\caption{The milestone $z_{(j+1)T}$, shown in purple, may not match any of the orange milestones: $x_0$ through $x_{jT}$ because it would make $\vec{z}$ bad, and $x_{(j+1)T}$ because it would make $J(\vec{x},\vec{z}) > j$.}
			\end{minipage}
			\hfill
			\begin{minipage}[t]{0.49\textwidth}
				\includegraphics[width=\textwidth]{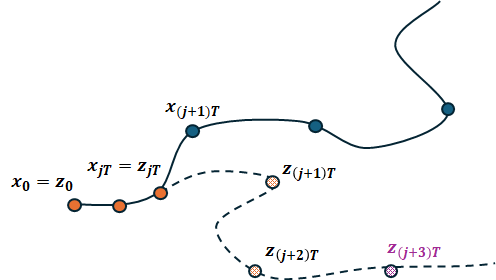}
				\caption{The milestone $z_{(j+3)T}$, shown in purple, may not match any of the orange milestones because it would make $\vec{z}$ bad.}
			\end{minipage}
		\end{figure}

		Equivalently, we can sample $\vec{z} \sim P_\cW$ one segment at a time with the constraint that the initial $jT+1$ vertices of $\vec{z}$ must match those of $\vec{x}$. That is, set $Head(\vec{z}, j) = Head(\vec{x}, j)$. Then for \\ $i = j, \ldots,\lfloor \sqrt{n} \rfloor$, sample the segment $Tail(\vec{z},i-1,i)$ conditioned on having set $Head(\vec{z},i-1)$. 

		For every $0 \le k \le \lfloor \sqrt{n} \rfloor$, the set of vertices given by the first $k+1$ milestones  of $\vec{z}$ is 
		\begin{align} 
			S_k = \{z_0, z_T, \ldots, z_{kT}\}\,.
		\end{align}
		
        Given that $Head(\vec{x},j) = Head(\vec{z},j)$, a sufficient condition for $J(\vec{x},\vec{z}) = j$ is
		\begin{align} \label{eq:z_j_plus_one_not_same_as_x_j_plus_one}
			z_{(j+1)T} \neq x_{(j+1)T}\,.
		\end{align}
		Similarly, given that $Head(\vec{x},j) = Head(\vec{z},j)$, the condition that $\vec{z}$ is good is equivalent to 
		\begin{align} \label{eq:z_kt_not_in_s_k_minus_one}
			z_{kT} \not \in S_{k-1} \qquad \forall k \in \{j+1, j+2, \ldots, \lfloor \sqrt{n} \rfloor\} \,.
		\end{align}
		
		
		For each $i \in \{j+1, \ldots, \lfloor \sqrt{n} \rfloor$, define 
		\begin{align} \label{def:Q_i}
			Q_i = 
			\begin{cases}
				\{x_{(j+1)T}\} \cup S_{j} & \text{if} \; \; i = j+1 \\ 
				S_{i-1} & \text{if} \; \;  j+1 < i \le \lfloor \sqrt{n} \rfloor \,.
			\end{cases}
		\end{align}
		Combining \eqref{eq:z_j_plus_one_not_same_as_x_j_plus_one} and \eqref{eq:z_kt_not_in_s_k_minus_one}, we get 
		\begin{align} \label{eq:y_counting_for_MZ_1}
			&\Pr\Bigl[good(\vec{z}) \wedge J(\vec{x},\vec{z}) = j \;|\; Head(\vec{x}, j) = Head(\vec{z}, j)\Bigr] \notag \\
			& \qquad      \ge \Pr\Bigl[ \Bigl( z_{kT} \not \in S_{k-1} \; \forall k  \in \{j+1, \ldots,  \lfloor \sqrt{n} \rfloor   \} \Bigr) \land  \Bigl( z_{(j+1)T} \neq x_{(j+1)T}\ \Bigr)  \mid  Head(\vec{x}, j) = Head(\vec{z}, j) \Bigr] \\
			& \qquad = \Pr\left[\bigwedge_{i=j+1}^{\lfloor \sqrt{n} \rfloor} z_{iT} \notin Q_{i} \mid Head(\vec{x},j) = Head(\vec{z},j)\right] \explain{By definition of $Q_i$ in \eqref{def:Q_i}}\\
			& \qquad = \prod_{i=j+1}^{\lfloor \sqrt{n} \rfloor} \Pr\left[z_{iT} \notin Q_{i} \mid \left(Head(\vec{x},j) = Head(\vec{z},j)\right) \wedge \bigl(\bigwedge_{k=j+1}^{i-1} z_{kT} \notin Q_{k} \bigr)\right]
		\end{align}
		For all $i \le \lfloor \sqrt{n} \rfloor$, let $\cW_i$ be the space of walks of length $Ti$ that can occur with positive probability under transition matrix $\cP$, formally defined as:
		\begin{align}
			\cW_i = \Bigl\{ \vec{w} \mid \vec{w} = (w_0,  \ldots, w_{T\cdot i}) \text{ where }w_0 = 1 \text{ and } \cP_{w_k, w_{k+1}} > 0 \text{ for all } 0 \le k < T\cdot i \Bigr\}\,.
		\end{align}
		Then, using \cref{eq:y_counting_for_MZ_1} gives 
		\begin{align} \label{eq:y_counting_for_MZ_1_2}
			&\Pr\Bigl[good(\vec{z}) \wedge J(\vec{x},\vec{z}) = j \;|\; Head(\vec{x}, j) = Head(\vec{z}, j)\Bigr]  \notag \\
			& \qquad \ge \prod_{i=j+1}^{\lfloor \sqrt{n} \rfloor} \min_{\vec{\xi} \in \cW_{i-1}} \Pr\Bigl[z_{iT} \notin Q_{i} \mid Head(\vec{z},i-1) = \vec{\xi}\Bigr] \,.
		\end{align}
		
		Since $T = t_{mix}(\sigma/(2n))$, \cref{lem:mixing_time_sigma} tells us that
		for all $v \in V$ and $1 \le i \le \lfloor \sqrt{n} \rfloor$
		\begin{align}
			\min_{\xi \in \cW_{i-1}} \Pr\Bigl[z_{iT} = v \mid Head(\vec{z}, i-1) = \xi \Bigr] \le \frac{2\sigma}{n}\,. \label{eq:zit_1}
		\end{align}
		By the union bound applied to \cref{eq:zit_1}, for each set $R \subseteq V$  and $1 \le i \le \lfloor \sqrt{n} \rfloor$, we have
		\begin{align}
			\min_{\xi \in \cW_{i-1}} \Pr\Bigl[z_{iT} \notin R \mid Head(\vec{z}, i-1) = \xi \Bigr] \ge 1 - \frac{2\sigma|R|}{n} \,. \label{eq:zit}
		\end{align}
		
		We consider two cases based on  $j$:
		
		\paragraph{Case 1: $j < \lfloor \sqrt{n} \rfloor-1$.}

		Since $j < \lfloor \sqrt{n} \rfloor-1$, we have 
		\begin{align}  \label{eq:x_j_plus_one_T_union_S_j}
			|\{x_{(j+1)T}\} \cup S_j| \le \lfloor \sqrt{n} \rfloor\,.
		\end{align}
		Furthermore, for all $j < k < \lfloor \sqrt{n} \rfloor$ we have 
		\begin{align}\label{eq:ub_S_k_for_k_more_than_j}
			|S_k| \le k+1 \le \lfloor \sqrt{n} \rfloor\,.
		\end{align}
		Combining \cref{eq:x_j_plus_one_T_union_S_j} and \cref{eq:ub_S_k_for_k_more_than_j} gives
		\begin{align}
			|Q_i| \le \lfloor \sqrt{n} \rfloor \qquad \forall j+1 \le i \le \lfloor \sqrt{n} \rfloor\,. \label{eq:Qi_small_case_1}
		\end{align} 
		We obtain: 
		\begin{align}
			& \Pr\Bigl[good(\vec{z}) \wedge J(\vec{x},\vec{z}) = j \;|\; Head(\vec{x}, j) = Head(\vec{z}, j)\Bigr] \notag \\
			& \qquad \geq \prod_{i=j+1}^{\lfloor \sqrt{n} \rfloor} \min_{\vec{\xi} \in \cW_{i-1}} \Pr\Bigl[z_{iT} \notin Q_{i} \mid Head(\vec{z},i-1) = \vec{\xi}\Bigr] \explain{By \cref{eq:y_counting_for_MZ_1_2}} \\ 
			& \qquad \ge \left( 1-\frac{2\sigma\lfloor \sqrt{n} \rfloor}{n}\right)^{\lfloor \sqrt{n} \rfloor-j} \explain{By  \cref{eq:zit} with $R = Q_i$ and since   $|Q_i| \leq \sqrt{n}$ by \cref{eq:Qi_small_case_1}.}\\
			& \qquad \ge \left(1-\frac{2\sigma}{\sqrt{n}}\right)^{\sqrt{n}} \,. \label{eq:y_counting_for_MZ_2_case_1_1}
		\end{align}
		
		By \cref{lem:derivative}, we have that $\left(1-\frac{2\sigma}{\sqrt{n}}\right)^{\sqrt{n}}$ is an increasing function of $\sqrt{n}$ since $n \geq 16 \sigma^2$.
		Therefore it is minimized at $\sqrt{n} = 4\sigma$.
		Substituting in  \cref{eq:y_counting_for_MZ_2_case_1_1}, we get 
		\begin{align}
			\Pr\Bigl[good(\vec{z}) \wedge J(\vec{x},\vec{z}) = j \;|\; Head(\vec{x}, j) = Head(\vec{z}, j)\Bigr]
			&\ge \left( 1 - \frac{2\sigma}{4\sigma}\right)^{4\sigma} = 2^{-4\sigma}\,. \label{eq:y_counting_for_MZ_2_case_1}
		\end{align}
		
		\paragraph{Case 2: $j = \lfloor \sqrt{n} \rfloor-1$.}
		We again invoke \cref{eq:zit}.
		In this case,
		\begin{align} \label{eq:Q_rtn}
			|Q_{\lfloor \sqrt{n} \rfloor}| \le |S_{\lfloor \sqrt{n} \rfloor - 1}| +1 \le \sqrt{n} + 1 \,.
		\end{align}
		
		Using \cref{eq:y_counting_for_MZ_1_2}, we obtain: 
		\begin{align}
			& \Pr\Bigl[good(\vec{z}) \wedge J(\vec{x},\vec{z}) = j \;|\; Head(\vec{x}, j) = Head(\vec{z}, j)\Bigr] \notag \\
			& \qquad \geq \min_{\vec{\xi} \in \cW_{\lfloor \sqrt{n} \rfloor -1}} \Pr\Bigl[z_{\lfloor \sqrt{n} \rfloor T} \notin Q_{\lfloor \sqrt{n} \rfloor} \mid Head(\vec{z},\lfloor \sqrt{n} \rfloor-1) = \vec{\xi}\Bigr]\,. \label{eq:result1_of_applying_y_counting_for_MZ_1_2}
		\end{align}

		Using \cref{eq:zit} with $R = Q_{\lfloor \sqrt{n} \rfloor}$ and since $|Q_{\lfloor \sqrt{n} \rfloor} \le \sqrt{n}+1$ by \cref{eq:Q_rtn}, we can lower bound the right hand side of  \cref{eq:result1_of_applying_y_counting_for_MZ_1_2} as follows:
		\begin{align}
			\min_{\vec{\xi} \in \cW_{\lfloor \sqrt{n} \rfloor -1}} \Pr\Bigl[z_{\lfloor \sqrt{n} \rfloor T} \notin Q_{\lfloor \sqrt{n} \rfloor} \mid Head(\vec{z},\lfloor \sqrt{n} \rfloor-1) = \vec{\xi}\Bigr]
			& \ge  1-\frac{2\sigma(\lfloor \sqrt{n} \rfloor+1)}{n} \,. \label{eq:result2_of_applying_y_counting_for_MZ_1_2}
		\end{align}
		
		Combining \cref{eq:result1_of_applying_y_counting_for_MZ_1_2} and \cref{eq:result2_of_applying_y_counting_for_MZ_1_2}, we get 
		\begin{align}
			\Pr\Bigl[good(\vec{z}) \wedge J(\vec{x},\vec{z}) = j \;|\; Head(\vec{x}, j) = Head(\vec{z}, j)\Bigr] \geq  1-\frac{2\sigma(\lfloor \sqrt{n} \rfloor+1)}{n}\,.  \label{eq:result3_of_applying_y_counting_for_MZ_1_2}
		\end{align}
		
		We can further lower bound the right hand side of \cref{eq:result3_of_applying_y_counting_for_MZ_1_2} as follows:
		\begin{align}
			1-\frac{2\sigma(\lfloor \sqrt{n} \rfloor+1)}{n} 
			& \ge 1 - \frac{\sqrt{n}(\sqrt{n}+1)}{2n} \explain{Since $\sqrt{n} \ge 4\sigma$.}\\
			& \ge \frac{1}{4} \explain{Since $n \ge 16\sigma^2 \ge 16$.}\\
			& \ge 2^{-4\sigma}\,. \explain{Since $\sigma \ge 1$.} \\ \label{eq:almost_there_y_counting_for_MZ_2_case_2}
		\end{align}
		
		Combining \cref{eq:result3_of_applying_y_counting_for_MZ_1_2} and \cref{eq:y_counting_for_MZ_2_case_2}, we get 
		\begin{align}
			\Pr\Bigl[good(\vec{z}) \wedge J(\vec{x},\vec{z}) = j \;|\; Head(\vec{x}, j) = Head(\vec{z}, j)\Bigr] \geq 2^{-4\sigma} \,. \label{eq:y_counting_for_MZ_2_case_2}
		\end{align}
		In both cases, we obtained  (by \cref{eq:y_counting_for_MZ_2_case_1} and  \cref{eq:y_counting_for_MZ_2_case_2})
		\begin{align} \label{eq:y_counting_for_MZ_2}
			\Pr\Bigl[good(\vec{z}) \wedge J(\vec{x},\vec{z}) = j \;|\; Head(\vec{x}, j) = Head(\vec{z}, j) \Bigr] \ge 2^{-4\sigma}\,.
		\end{align}
		
		Combining \cref{eq:y_counting_for_MZ_0} with \cref{eq:y_counting_for_MZ_2} yields
		\begin{align}
			\sum_{\substack{\vec{y} \in \cW:\\ J(\vec{x},\vec{y}) = j\\ good(\vec{y})}} \cP[Tail(\vec{y},j)] \cdot \cP_{y_{j\cdot T},\; y_{j \cdot T + 1}}
			\ge 2^{-4\sigma}\,. \notag
		\end{align}		
		This concludes the proof of the lemma.
	\end{proof}

	The next lemma is inspired by Lemma 9 from \cite{BCR23}.
	However, it is slightly different, so we include the proof here.
	\begin{lemma} \label{lem:r_tilde}
		Let $v \in V$ and $\cZ \subseteq \cX$.
		Then we have
		\[
		\sum_{g_{\vec{x},b_1}, g_{\vec{y},b_2} \in \mathcal{Z} } r(g_{\vec{x},b_1},g_{\vec{y},b_2}) 
		\mathbbm{1}_{\{g_{\vec{x},b_1}(v) \ne g_{\vec{y},b_2}(v)\}}
		\le 2 \sum_{\substack{g_{\vec{x},b_1},g_{\vec{y},b_2} \in \mathcal{Z}:\\ v \in Tail(\vec{y},J(\vec{x},\vec{y}))}} r(g_{\vec{x},b_1},g_{\vec{y},b_2}) \,. 
		\]
	\end{lemma}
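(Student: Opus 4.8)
The plan is to reduce the inequality to the following combinatorial claim: whenever $r(g_{\vec x,b_1},g_{\vec y,b_2})>0$ and $g_{\vec x,b_1}(v)\ne g_{\vec y,b_2}(v)$, the vertex $v$ lies on at least one of the tails $Tail(\vec x,J(\vec x,\vec y))$ or $Tail(\vec y,J(\vec x,\vec y))$. Granting the claim, on the support of $r$ we obtain the pointwise bound $\mathbbm 1_{\{g_{\vec x,b_1}(v)\ne g_{\vec y,b_2}(v)\}}\le \mathbbm 1_{\{v\in Tail(\vec x,J(\vec x,\vec y))\}}+\mathbbm 1_{\{v\in Tail(\vec y,J(\vec x,\vec y))\}}$; multiplying by $r\ge 0$ and summing over $g_{\vec x,b_1},g_{\vec y,b_2}\in\cZ$ splits the left-hand side into two sums, which are equal to one another because both $r$ and $J$ are symmetric in their two arguments (the latter by \cref{def:heads_and_tails}). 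This produces precisely the factor $2$ and the restriction $v\in Tail(\vec y,J(\vec x,\vec y))$ appearing in the statement.

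It remains to prove the claim. Fix such a pair and set $j=J(\vec x,\vec y)$. By \cref{def:r} we have $b_1\ne b_2$, both $\vec x$ and $\vec y$ are good, and $\vec x\ne\vec y$; by \cref{def:heads_and_tails}, $Head(\vec x,j)=Head(\vec y,j)$, and $\vec x\ne\vec y$ forces $j<\lfloor\sqrt n\rfloor$, hence $jT<L$. I would then run a short case analysis on whether $v$ occurs on $\vec x$ and on $\vec y$, using the formulas for $f_{\vec x}$ and $g_{\vec x,b}$ in \cref{def:f_x_and_g_x_b}. If $v$ lies on neither walk, then $g_{\vec x,b_1}(v)=(dist(v,1),-1)=g_{\vec y,b_2}(v)$, contradicting the hypothesis. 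If $v$ lies on exactly one of the walks, say $\vec x$, then its last occurrence on $\vec x$ cannot fall inside the shared prefix $Head(\vec x,j)=Head(\vec y,j)$ (otherwise $v$ would lie on $\vec y$ too), so $v\in Tail(\vec x,j)$. Finally, if $v$ lies on both walks, let $i_x$ and $i_y$ be the last indices at which it occurs; if $\max(i_x,i_y)>jT$ then $v$ lies on the corresponding tail, while if $i_x,i_y\le jT$ the shared prefix forces $i_x=i_y$, so $f_{\vec x}(v)=-i_x=f_{\vec y}(v)$ and, since $i_x\le jT<L$, also $v\ne x_L$ and $v\ne y_L$, whence $g_{\vec x,b_1}(v)=g_{\vec y,b_2}(v)$, again a contradiction. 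Thus every non-contradictory case places $v$ on one of the two tails, proving the claim.

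The only delicate point is the last subcase: one must use equality of the shared prefixes $Head(\vec x,j)=Head(\vec y,j)$ to conclude that the last occurrences of $v$ on the two walks coincide, and then use $jT<L$ (which is exactly where $\vec x\ne\vec y$ is needed) to conclude that $v$ is not the hidden-bit endpoint of either walk, so that the full pair $g_{\vec x,b_1}(v)=(f_{\vec x}(v),-1)$ equals $g_{\vec y,b_2}(v)=(f_{\vec y}(v),-1)$. Everything else is routine bookkeeping with \cref{def:f_x_and_g_x_b,def:heads_and_tails} together with the symmetry of $r$ and $J$ and a two-term union bound; I expect this subcase to be the main (and only mildly subtle) obstacle.
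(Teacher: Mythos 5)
Your proposal is correct and takes essentially the same route as the paper: the paper asserts (without proof) that $g_{\vec{x},b_1}(v)\ne g_{\vec{y},b_2}(v)$ forces either $v\in Tail(\vec{x},J(\vec{x},\vec{y}))\cup Tail(\vec{y},J(\vec{x},\vec{y}))$ or $\vec{x}=\vec{y}$ (in which case $r=0$), then applies a two-term union bound and the symmetry of $r$ to collapse the two tail-sums into one. Your case analysis simply fills in the proof of that asserted dichotomy, and your observation that the symmetry of $J$ is also needed to equate the two sums is correct though left implicit in the paper.
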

	\begin{proof}
		
		If $g_{\vec{x},b_1}(v) \ne g_{\vec{y},b_2}(v)$, then either:
		\begin{itemize}
			\item $v \in Tail(\vec{x}, J(\vec{x},\vec{y})) \cup Tail(\vec{y},J(\vec{x},\vec{y}))$
			\item or $\vec{x} = \vec{y}$, in which case $r(g_{\vec{x},b_1}, g_{\vec{y},b_2}) = 0$.
		\end{itemize}
		Therefore
		\begin{align} 
			\sum_{g_{\vec{x},b_1}, g_{\vec{y},b_2} \in \mathcal{Z} } r(g_{\vec{x},b_1},g_{\vec{y},b_2}) \mathbbm{1}_{\{g_{\vec{x},b_1}(v) \ne g_{\vec{y},b_2}(v)\}}
			&\le \sum_{\substack{g_{\vec{x},b_1}, g_{\vec{y},b_2} \in \mathcal{Z}:\\ v \in Tail(\vec{x}, J(\vec{x},\vec{y})) \cup Tail(\vec{y},J(\vec{x},\vec{y}))}} r(g_{\vec{x},b_1},g_{\vec{y},b_2}) \notag \\
			&\le \sum_{\substack{g_{\vec{x},b_1},g_{\vec{y},b_2} \in \mathcal{Z}: \\
					v \in Tail(\vec{x},J(\vec{x},\vec{y}))}} r(g_{\vec{x},b_1},g_{\vec{y},b_2}) 
			+ 
			\sum_{\substack{g_{\vec{x},b_1},g_{\vec{y},b_2} \in \mathcal{Z}:  \\
					v \in Tail(\vec{y},J(\vec{x},\vec{y}))}} r(g_{\vec{x},b_1},g_{\vec{y},b_2})  \notag \\
			&= 2\sum_{\substack{g_{\vec{x},b_1},g_{\vec{y},b_2} \in \mathcal{Z}:  \\
					v \in Tail(\vec{y},J(\vec{x},\vec{y}))}} r(g_{\vec{x},b_1},g_{\vec{y},b_2}) \explain{By symmetry of $r$.}
		\end{align}
		This completes the proof of the lemma.
	\end{proof}
	
	\begin{lemma} \label{lem:exists_Z_with_qZ_positive}
		In the setting of Theorem~\ref{thm:lower_bound_in_terms_of_mixing_time}, if $n \ge 16\sigma^2$ then there exists a subset  $\cZ \subseteq \mathcal{X}$  with $q(\cZ) > 0$.
	\end{lemma}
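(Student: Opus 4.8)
The plan is to exhibit an explicit pair of functions in $\cX$ that witness $q(\cZ) > 0$ for the singleton-union $\cZ$ containing them. Concretely, I would first pick any good walk $\vec{x} = (x_0, \ldots, x_L) \in \cW$; such a walk exists because the chain is irreducible and lazy, so with $L = \lfloor\sqrt n\rfloor \cdot T$ edges we can surely find a walk that does not repeat milestones — indeed, since $n \ge 16\sigma^2$ forces $\sqrt n \ge 4\sigma \ge 4$, one can greedily extend milestone by milestone, and the bound $1 - 2\sigma\lfloor\sqrt n\rfloor/n > 0$ from \cref{eq:zit} (applied in the proof of \cref{lem:y_counting_for_MZ}) shows that at each milestone there is positive probability of landing on a fresh vertex. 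Then I would take a second good walk $\vec{y} \in \cW$ that shares a proper prefix with $\vec{x}$ — e.g. agreeing through the first milestone, so $J(\vec{x},\vec{y}) = j$ for some $j < \lfloor\sqrt n\rfloor$ — but with $\vec{y} \ne \vec{x}$; again irreducibility guarantees such a $\vec{y}$ exists with $\cP[\vec{y}] > 0$.

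Having fixed such $\vec{x}, \vec{y}$, set $\cZ = \{g_{\vec{x},0}, g_{\vec{y},1}\}$. Since both walks are good, $\vec{x} \ne \vec{y}$, and $b_1 = 0 \ne 1 = b_2$, \cref{def:r} gives
\[
r(g_{\vec{x},0}, g_{\vec{y},1}) = \frac{\cP[\vec{x}]\,\cP[\vec{y}]}{\cP[Head(\vec{y},j)]} > 0,
\]
because all three factors are positive (the numerator by choice of $\vec{x},\vec{y} \in \cW$, the denominator a product of positive transition probabilities along the shared prefix). It remains to produce a vertex $v \in V$ with $g_{\vec{x},0}(v) \ne g_{\vec{y},1}(v)$. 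Since $\vec{x} \ne \vec{y}$, the two walks diverge after milestone $j$, so there is some index $i > jT$ with $x_i \ne y_i$; this means the symmetric difference of the vertex sets traversed after divergence is nonempty, and any vertex $v$ lying on $Tail(\vec{y},j)$ but whose last occurrence differs between the two walks (or which lies on one walk and not the other) will have $f_{\vec{x}}(v) \ne f_{\vec{y}}(v)$, hence $g_{\vec{x},0}(v) \ne g_{\vec{y},1}(v)$. Concretely, the endpoint $v = y_L$ works unless $y_L$ also happens to be a vertex of $\vec{x}$ with the same last-occurrence index, which can be avoided by the choice of $\vec{y}$ (e.g. picking $\vec{y}$ so its terminal milestone is fresh, as in Case 2 of \cref{lem:y_counting_for_MZ}). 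For this $v$, the term $r(g_{\vec{x},0},g_{\vec{y},1}) \cdot \mathbbm{1}_{\{g_{\vec{x},0}(v) \ne g_{\vec{y},1}(v)\}}$ is strictly positive, and since it appears in the double sum defining $\widetilde q(\cZ,v)$, we get $q(\cZ) \ge \widetilde q(\cZ,v) > 0$.

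The main obstacle — really the only nonroutine point — is verifying existence of the two good walks with the prescribed shared-prefix structure and with a genuinely distinguishing vertex; everything else is bookkeeping against \cref{def:r} and \cref{def:f_x_and_g_x_b}. I expect this existence to follow cleanly from irreducibility plus the counting estimate \cref{eq:zit} already established in the excerpt (which is exactly why the hypothesis $n \ge 16\sigma^2$ is invoked), so the lemma should go through without new machinery.
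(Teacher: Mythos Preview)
Your proposal is correct and follows essentially the same approach as the paper: construct two good walks $\vec{x}\neq\vec{y}$ with a shared prefix (the paper does this via a direct reachability count showing at least $\lfloor\sqrt n\rfloor+2$ vertices are reachable in $T$ steps, whereas you appeal to the probabilistic estimate \eqref{eq:zit}, but both rest on the same mixing bound and the hypothesis $n\ge 16\sigma^2$), then take $\cZ=\{g_{\vec{x},0},g_{\vec{y},1}\}$ and check $r>0$ together with a distinguishing vertex. One simplification: once you ensure $y_L\neq x_L$, the vertex $v=y_L$ automatically distinguishes the two functions via the \emph{second} coordinate of $g$ (it is $1$ for $g_{\vec{y},1}$ and $-1$ for $g_{\vec{x},0}$), so your worry about matching last-occurrence indices is unnecessary.
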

	\begin{proof}
		By definition of $\sigma$ and because $\max_{v \in V} \pi(v) \ge 1/n$, we have
		\begin{align} \label{eq:min_pi}
			\min_v \pi(v) \ge 4/n^{1.5} \,.
		\end{align}
		
		Fix an arbitrary vertex $u$. Let $S$ be the set of vertices unreachable from $u$ via a random walk   that evolves according to $\cP$  and has at most $T = t_{mix}(\frac{\sigma}{2n})$ steps. Denote $s = |S|$. Then 
		\begin{align} \label{eq:unreachable_from_u_PT}
			(\cP^T)_{u,v} = 0 \; \; \forall v \in S \,.
		\end{align}
		
		By definition of $t_{mix}$, we have
		\begin{align}
			\frac{\sigma}{2n} &\ge \frac{1}{2} \sum_{v \in V} |(\cP^T)_{u,v} - \pi(v)|\notag \\
			& \geq \frac{1}{2} \sum_{v \in S} | \pi(v) | \explain{By \cref{eq:unreachable_from_u_PT}} \\
			&\ge \frac{s}{2} \min_{v \in V} \pi(v) \explain{Since $s = |S|$.}\\
			&\ge \frac{2s}{n^{1.5}} \,. \explain{By \cref{eq:min_pi}.}\\ \label{eq:exists_Z_1}
		\end{align} 
		Meanwhile, since $n \ge 16\sigma^2$, we have
		\begin{align} \label{eq:exists_Z_2}
			\frac{\sigma}{2n} \le \frac{1}{8\sqrt{n}}\,.
		\end{align}
		Combining \cref{eq:exists_Z_1} and \cref{eq:exists_Z_2}, we get $s \le n/16$. Thus the number of vertices reachable  from $u$ in $T$ steps is $n - s \geq 15n/16$.
		For  $n \geq 5$, we have  $15n/16 \geq \lfloor \sqrt{n} \rfloor + 2$, which means that at least $\lfloor \sqrt{n} \rfloor + 2$ vertices are reachable via $\cP$ from any vertex $u$ within $T$ steps.
		
		We can then construct two walks, $\vec{x}$ and $\vec{y}$, in the following manner.
		For $i = 1, \ldots, \lfloor \sqrt{n} \rfloor$, take $x_{iT}$ to be an arbitrary vertex reachable from $x_{(i-1)T}$ other than $x_0, \ldots, x_{(i-1)T}$; this is possible since at least $\lfloor \sqrt{n} \rfloor + 2$ vertices are reachable. Connect the milestones using an arbitrary path such that every edge $(u,w)$ in the path has $\cP_{u,w} > 0$.
		Construct $\vec{y}$ in the same manner but requiring that $Head(\vec{y}, \lfloor \sqrt{n} \rfloor -1) = Head(\vec{x}, \lfloor \sqrt{n} \rfloor -1)$  and   $y_L \not \in \left\{  x_0, x_T, x_{2T}, \ldots, x_{\lfloor \sqrt{n} \rfloor T} \right\}$.
		
		Define  $\cZ = \{g_{\vec{x}, 0}, g_{\vec{y}, 1}\}$. We will show that $q(\cZ) > 0$. First observe that $\vec{x} \neq \vec{y}$, both $\vec{x}$ and $\vec{y}$ are \emph{good} since they do not repeat milestones, and the bit hidden by $g_{\vec{x},0}$ is different from the bit hidden by $g_{\vec{y},1}$. The length of the prefix shared by $\vec{x}$ and $\vec{y}$ is  $J(\vec{x},\vec{y}) = \lfloor \sqrt{n} \rfloor -1$. Then 
		\begin{align}
			r(g_{\vec{x},0}, g_{\vec{y},1}) = \frac{\cP[\vec{x}] \cP[\vec{y}]}{\cP[Head(\vec{y},\lfloor \sqrt{n} \rfloor -1)]}  > 0\,.\label{eq:positive_r_gx0_gy1}
		\end{align}
		We have 
		\begin{align}
			q(\cZ) & =  \max_{v \in V} \sum_{F_1 \in \cZ} \sum_{F_2 \in \cZ} r(F_1, F_2) \cdot \mathbbm{1}_{\{F_1(v) \ne F_2(v)\}} \notag \\
			& =  2 \max_{v \in V} r(g_{\vec{x}, 0}, g_{\vec{y}, 1}) \cdot \mathbbm{1}_{\{g_{\vec{x}, 0}(v) \ne g_{\vec{y}, 1}(v)\}} \explain{Since $\cZ = \{g_{\vec{x}, 0}, g_{\vec{y}, 1}\}$ and $r(F, F) = 0$  $\forall F \in \cZ$.} \\
			& > 0\,.  \explain{Using \cref{eq:positive_r_gx0_gy1} and  $g_{\vec{x},0}(x_L) \neq g_{\vec{y},1}(y_L)$.}
		\end{align}
		This completes the proof.
	\end{proof}
	
	\begin{lemma} \label{lem:derivative}
		For  all $x \ge 2y \ge 1$ we have
		\begin{align}
			\frac{\partial}{\partial x} \left( 1 - \frac{y}{x}\right)^x \ge 0\,.
		\end{align}
	\end{lemma}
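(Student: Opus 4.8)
The plan is to show that $h(x) = \left(1 - \frac{y}{x}\right)^x$ is non-decreasing in $x$ for $x \ge 2y \ge 1$ by computing its logarithmic derivative. First I would note that since $x \ge 2y$, we have $1 - \frac{y}{x} \ge \frac{1}{2} > 0$, so $\ln h(x) = x \ln\left(1 - \frac{y}{x}\right)$ is well defined, and it suffices to show $\frac{\partial}{\partial x} \ln h(x) \ge 0$, because $h(x) > 0$ and $\frac{\partial}{\partial x} h(x) = h(x) \cdot \frac{\partial}{\partial x} \ln h(x)$.

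Next I would differentiate: writing $\phi(x) = \ln\left(1 - \frac{y}{x}\right)$, we have $\frac{\partial}{\partial x} \ln h(x) = \phi(x) + x \phi'(x)$, where $\phi'(x) = \frac{y/x^2}{1 - y/x} = \frac{y}{x(x-y)}$. Thus $x\phi'(x) = \frac{y}{x-y}$, and the claim reduces to showing
\begin{align*}
\ln\left(1 - \frac{y}{x}\right) + \frac{y}{x-y} \ge 0 \qquad \text{whenever } x \ge 2y \ge 1.
\end{align*}
Substituting $t = \frac{y}{x} \in (0, \tfrac12]$, this is equivalent to $\ln(1-t) + \frac{t}{1-t} \ge 0$ for $t \in (0, \tfrac12]$. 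I would prove this last inequality by setting $\psi(t) = \ln(1-t) + \frac{t}{1-t}$, noting $\psi(0) = 0$, and computing $\psi'(t) = \frac{-1}{1-t} + \frac{(1-t) + t}{(1-t)^2} = \frac{-1}{1-t} + \frac{1}{(1-t)^2} = \frac{t}{(1-t)^2} \ge 0$ for $t \in [0,1)$; hence $\psi(t) \ge \psi(0) = 0$ on $(0, \tfrac12]$, which gives the result.

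The argument is entirely elementary; there is no real obstacle beyond bookkeeping with the chain rule and making sure the domain condition $x \ge 2y$ (which guarantees $t \le \tfrac12$, and in particular keeps us away from the singularity at $t = 1$) is used only to ensure positivity of $1 - y/x$ — in fact the key inequality $\psi(t) \ge 0$ holds on all of $[0,1)$, so the $x \ge 2y$ hypothesis is more than enough. The one point to be careful about is the edge case $x = y$ or $x < y$, which the hypothesis $x \ge 2y$ excludes, so the expressions $\frac{y}{x-y}$ and $\ln(1 - y/x)$ are all finite and well-defined throughout.
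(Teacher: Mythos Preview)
Your proof is correct and self-contained. It is close in spirit to the paper's argument but differs in one useful way: the paper substitutes $z = x/y$ and reduces the claim to the well-known monotonicity of $(1-1/z)^z$ for $z>1$, which it simply asserts without proof; you instead take the logarithmic derivative directly and, after the substitution $t=y/x$, actually prove the needed inequality $\ln(1-t)+\tfrac{t}{1-t}\ge 0$ by checking $\psi(0)=0$ and $\psi'(t)=t/(1-t)^2\ge 0$. Since $t=1/z$, this is precisely the step the paper leaves implicit, so your version is the same reduction carried one step further and is more self-contained. You are also right that the inequality holds on all of $[0,1)$, so the hypothesis $x\ge 2y$ is only used to keep $1-y/x>0$ and could be weakened to $x>y$.
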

	\begin{proof}
		
		Define $z = x/y$.
		Then
		\begin{align}
			\frac{\partial}{\partial x} \left( 1 - \frac{y}{x}\right)^x
			&= \frac{\partial z}{\partial x} \frac{\partial}{\partial z} \left( \left( 1 - \frac{1}{z}\right)^z \right)^y \explain{By chain rule.} \\
			&= \frac{y}{y} \left( \left( 1 - \frac{1}{z}\right)^z \right)^{y-1} \frac{\partial}{\partial z} \left( 1 - \frac{1}{z}\right)^z \explain{By chain rule.} \\
			&\ge 0 \,. \explain{Since $z > 1$}
		\end{align}
		This concludes the proof of the lemma.
	\end{proof}
	
	\begin{lemma}[Folklore] \label{lem:reversibility_markov_chain}
		Consider a reversible Markov chain on $G$ with transition matrix $\cP$. For all $u,v \in V$ and $\ell \in \mathbbm{N}$, we have
		\begin{align*}
			\mbox{\em E}_{visit}(u,v,\ell) \pi(u) = \mbox{\em E}_{visit}(v,u,\ell) \pi(v)\,.
		\end{align*}
	\end{lemma}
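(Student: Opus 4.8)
The plan is to reduce the statement to the well-known fact that reversibility of $\cP$ with respect to $\pi$ propagates to all powers $\cP^t$, and then use linearity of expectation. First I would write $E_{visit}(u,v,\ell)$ as a sum of $t$-step transition probabilities: if $(w_0,w_1,\ldots,w_\ell)$ is the random walk with $w_0=u$ evolving by $\cP$, then by linearity of expectation
\begin{align*}
  E_{visit}(u,v,\ell) = \E\left[ \sum_{t=0}^{\ell} \mathbbm{1}_{\{w_t = v\}} \right] = \sum_{t=0}^{\ell} \Pr[w_t = v \mid w_0 = u] = \sum_{t=0}^{\ell} (\cP^t)_{u,v}\,.
\end{align*}
The analogous identity holds for $E_{visit}(v,u,\ell) = \sum_{t=0}^{\ell} (\cP^t)_{v,u}$.

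The heart of the argument is the claim that $\pi(u)(\cP^t)_{u,v} = \pi(v)(\cP^t)_{v,u}$ for every $t \in \mathbbm{N}$ and all $u,v \in V$. I would prove this by induction on $t$. The base case $t=0$ is trivial (both sides are $\pi(u)\mathbbm{1}_{\{u=v\}}$), and $t=1$ is exactly the reversibility hypothesis $\pi(u)\cP_{u,v} = \pi(v)\cP_{v,u}$. For the inductive step, assuming the claim for $t$, I would expand
\begin{align*}
  \pi(u)(\cP^{t+1})_{u,v} = \sum_{w \in V} \pi(u)(\cP^t)_{u,w}\, \cP_{w,v} = \sum_{w \in V} \pi(w)(\cP^t)_{w,u}\, \cP_{w,v} = \sum_{w \in V} (\cP^t)_{w,u}\, \pi(v)\,\cP_{v,w} = \pi(v)(\cP^{t+1})_{v,u}\,,
\end{align*}
where the second equality uses the inductive hypothesis and the third uses reversibility applied to the pair $(w,v)$.

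Finally I would combine the two ingredients: multiplying the first display by $\pi(u)$, pushing $\pi(u)$ inside the finite sum, applying the claim term by term, and pulling $\pi(v)$ back out gives
\begin{align*}
  \pi(u)\, E_{visit}(u,v,\ell) = \sum_{t=0}^{\ell} \pi(u)(\cP^t)_{u,v} = \sum_{t=0}^{\ell} \pi(v)(\cP^t)_{v,u} = \pi(v)\, E_{visit}(v,u,\ell)\,,
\end{align*}
which is the assertion. There is no real obstacle here — the only point requiring a little care is making the induction on $\cP^t$ explicit rather than citing it, and keeping track of which index pair reversibility is applied to in the inductive step; everything else is linearity of expectation and a finite rearrangement.
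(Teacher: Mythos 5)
Your proof is correct and uses essentially the same decomposition as the paper's: write $E_{visit}$ as a sum of $t$-step transition probabilities by linearity of expectation, then apply the identity $\pi(u)(\cP^t)_{u,v} = \pi(v)(\cP^t)_{v,u}$ termwise. The only difference is that you prove this identity by a short induction on $t$, whereas the paper simply cites it from Levin--Peres (equation 1.30); the indexing discrepancy (you sum from $t=0$, the paper from $t=1$) is inconsequential since the $t=0$ term is trivially symmetric.
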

	\begin{proof}
		We have
		\begin{align}
			\mbox{E}_{visit}(u,v,\ell) \pi(u) &= \sum_{i=1}^\ell \pi(u) \cP^i_{u,v} \explain{By definition of $\mbox{E}_{visit}$.}\\
			&= \sum_{i=1}^\ell \pi(v) \cP^i_{v,u} \explain{By \cite{levin2017markov} equation 1.30.}\\
			&= \mbox{E}_{visit}(v,u,\ell) \pi(v) \,. \explain{By definition of $\mbox{E}_{visit}$.}
		\end{align}
		This concludes the proof of the lemma.
	\end{proof}
	
	We also use  the following lemma from \cite{BCR23}.
	
	\begin{lemma}[\cite{BCR23}, Lemma 6] \label{lem:valid_unique_local_min}
		Suppose $\vec{x} = (x_0, x_1, \ldots, x_\ell)$ is a walk on $G$ and $f : V \to \mathbbm{R}$ is a valid function for the walk $\vec{x}$. Then $f$ has a unique local minimum at $x_\ell$, the last vertex on the walk.
	\end{lemma}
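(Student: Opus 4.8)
The plan is to verify the two halves of the claim separately: that $x_\ell$ is a local minimum of $f$, and that no other vertex of $G$ is. For the uniqueness half I would split a hypothetical second local minimum $w \ne x_\ell$ into the case $w \notin \vec{x}$ and the case $w \in \vec{x}$, since Condition~2 of \cref{def:valid_function} controls the off-walk values while Conditions~1 and~3 control the on-walk values.

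First I would check that $x_\ell$ is a local minimum. Take any neighbour $u$ of $x_\ell$. If $u \notin \vec{x}$, then Condition~2 gives $f(u) = dist(x_0,u) > 0 \ge f(x_\ell)$, the last inequality by Condition~3. If $u \in \vec{x}$, let $k = \max\{i : u = x_i\}$; since $u \ne x_\ell$ we have $k < \ell$, and $\ell$ is the last-appearance index of $x_\ell$, so Condition~1 gives $f(u) > f(x_\ell)$. In all cases $f(u) \ge f(x_\ell)$, so $x_\ell$ is a local minimum.

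Next, suppose $w \ne x_\ell$ were also a local minimum. If $w \notin \vec{x}$, then $f(w) = dist(x_0,w) =: d \ge 1$ (since $w \ne x_0 \in \vec{x}$); choose a neighbour $w'$ of $w$ with $dist(x_0,w') = d-1$. Whether or not $w' \in \vec{x}$, Conditions~2 and~3 give $f(w') \le d-1 < d = f(w)$, contradicting that $w$ is a local minimum. Hence $w \in \vec{x}$. Let $k = \max\{i : w = x_i\}$; then $k < \ell$ because $w \ne x_\ell$, so the successor $x_{k+1}$ exists. The key step is that $x_{k+1}$ is a genuine neighbour of $w = x_k$: it cannot equal $x_k$, since otherwise $w$ would reappear at position $k+1$, contradicting the maximality of $k$, and so $\{x_k, x_{k+1}\} \in E$ by the model's restriction on $\cP$. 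Moreover the last-appearance index of $x_{k+1}$ is at least $k+1 > k$, so Condition~1 applied to $v = w$ and $u = x_{k+1}$ gives $f(x_{k+1}) < f(w)$ — again contradicting that $w$ is a local minimum. Therefore $x_\ell$ is the unique local minimum.

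This argument is short; the only point requiring care is the on-walk uniqueness case, where one must ensure that the successor vertex $x_{k+1}$ is a distinct neighbour of $w$ rather than a lazy/self-loop repetition of $w$ — which is exactly what maximality of the last-appearance index $k$ provides. The rest is bookkeeping with the last-appearance indices featured in Condition~1 of \cref{def:valid_function}.
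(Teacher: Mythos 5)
Your proof is correct. The paper itself does not include a proof of this lemma; it is imported verbatim from \cite{BCR23} (Lemma 6 there), so there is no in-paper argument to compare against. Your decomposition is the natural one: show $x_\ell$ is a local minimum via Conditions 1--3, then rule out any other local minimum $w$ by splitting on $w\notin\vec{x}$ (use a shortest-path predecessor of $w$, whose $f$-value is smaller whether or not it lies on the walk) versus $w\in\vec{x}$ (use the successor $x_{k+1}$ after $w$'s last appearance). The one point genuinely requiring care is the one you flag: because the underlying chain is lazy, a ``walk'' here may repeat a vertex at consecutive indices, so one must check that $x_{k+1}\ne x_k$ before concluding $\{x_k,x_{k+1}\}\in E$; maximality of the last-appearance index $k$ handles this exactly as you say. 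All steps check out and the conclusion follows.
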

	

	\section{Corollaries of the main theorem}
	
	
	We can connect \cref{thm:lower_bound_in_terms_of_mixing_time} to the spectral gap of the transition matrix of the Markov chain used
	via the following inequality~
	(see, e.g., \cite{levin2017markov}, Theorem 12.4):
	If $\cP$ is lazy, irreducible, and time-reversible, then
	\begin{align} \label{eq:mixing_time_and_spectral_gap}
		t_{mix}(\epsilon) \leq \left( \frac{1}{1-\lambda_2} \right) \log \left( \frac{1}{\epsilon  \min_{x \in V} \pi(x)}
		\right) \,.
	\end{align}
	
	We obtain the following corrollary, which lower bounds the randomized complexity of local search as a function of the spectral gap of $\cP$.
	
	\highspectralgapimplieslocalsearchhard*
	\begin{proof}
		We proceed by substituting \cref{eq:mixing_time_and_spectral_gap} into \cref{thm:lower_bound_in_terms_of_mixing_time}.
		This directly yields that the randomized query complexity of local search on $G$ is
		\begin{align} \label{eq:local_search_and_spectral_gap_unsimplified}
			\Omega\left( \frac{(1-\lambda_2) \sqrt{n}}{\log\big(2n/(\sigma \min_{x \in V} \pi(x))\big) \exp(3\sigma)} \right)\,.
		\end{align}
		By definition of $\sigma$ we have
		\begin{align} \label{eq:sigma_min_pi}
			\sigma \min_{x \in V} \pi(x) = \max_{x \in V} \pi(x) \ge 1/n\,.
		\end{align}
		Combining \cref{eq:local_search_and_spectral_gap_unsimplified} with \cref{eq:sigma_min_pi} yields that the randomized query complexity of local search on $G$ is
		\begin{align}
			\Omega\left( \frac{(1-\lambda_2) \sqrt{n}}{\log(n) \exp(3\sigma)} \right)\,.
		\end{align}
		This completes the proof of the corollary.
	\end{proof}
	
	
	The prior work in \cite{BCR23} implies a result similar to but weaker than \cref{cor:local_search_and_spectral_gap}.
	To get a result in terms of spectral gap via \cite{BCR23}, we need to connect the edge expansion $\beta$ to the spectral gap $1 - \lambda_2$ of a particular Markov chain.
	We will do this via the bottleneck ratio $\Phi_\star$.
	
	\begin{lemma}[\cite{levin2017markov}, Theorem 13.3] \label{lem:mixing_time_and_bottleneck_ratio}
		If the Markov chain is lazy, then 
		\begin{align}
			\frac{\Phi_\star^2}{2} \le 1 - \lambda_2 \le 2\Phi_\star \,.
		\end{align}
	\end{lemma}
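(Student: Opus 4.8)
The statement is the discrete Cheeger inequality for lazy reversible chains, so the plan is to reproduce its standard two-sided argument (or, if one prefers, to simply invoke \cite{levin2017markov}, Theorem~13.3). Throughout I would work with the edge measure $Q(u,v) = \pi(u)\cP_{u,v}$, the inner product $\langle f,h\rangle_\pi = \sum_{u\in V}\pi(u)f(u)h(u)$ on $L^2(\pi)$, and the Dirichlet form $\mathcal{E}(f,f) = \tfrac12\sum_{u,v\in V}Q(u,v)\big(f(u)-f(v)\big)^2$. The first step is to record the standard facts that reversibility makes $\cP$ self-adjoint on $L^2(\pi)$, that $\mathcal{E}(f,f) = \langle f,(I-\cP)f\rangle_\pi$, and that this yields the variational characterization $1-\lambda_2 = \min_{f:\,\mathrm{Var}_\pi(f)>0}\mathcal{E}(f,f)/\mathrm{Var}_\pi(f)$.

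For the easy inequality $1-\lambda_2 \le 2\Phi_\star$, I would plug the indicator $f=\mathbf{1}_S$ of a set $S$ attaining the minimum in the definition of $\Phi_\star$ (so $\pi(S)\le\tfrac12$) into the variational formula. Reversibility gives $\mathcal{E}(\mathbf{1}_S,\mathbf{1}_S) = \sum_{u\in S,\,v\notin S}Q(u,v) = \pi(S)\Phi_\star$, and since $\pi(S)\le\tfrac12$ we have $\mathrm{Var}_\pi(\mathbf{1}_S) = \pi(S)(1-\pi(S))\ge \pi(S)/2$, so the Rayleigh quotient is at most $2\Phi_\star$.

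For the harder inequality $\Phi_\star^2/2 \le 1-\lambda_2$, I would take a $\lambda_2$-eigenfunction $f$ of $\cP$ (which changes sign since $\langle f,\mathbf{1}\rangle_\pi=0$), replace $f$ by $-f$ if needed so that $A=\{u:f(u)>0\}$ has $\pi(A)\le\tfrac12$, and set $g=f^+=\max(f,0)$. Two facts are then needed. First, $\mathcal{E}(g,g)\le(1-\lambda_2)\langle g,g\rangle_\pi$: expanding $\langle g,(I-\cP)g\rangle_\pi$ over $u\in A$ and using $f^+\ge f$ pointwise gives $[(I-\cP)g](u)\le[(I-\cP)f](u)=(1-\lambda_2)f(u)$ for $u\in A$, and summing against $\pi(u)g(u)=\pi(u)f(u)\ge0$ finishes it. Second, the co-area estimate $\mathcal{E}(g,g)/\langle g,g\rangle_\pi \ge \Phi_\star^2/2$ for any nonnegative $g$ supported on a set of $\pi$-measure at most $\tfrac12$: ordering the vertices so that $g(1)\ge\cdots\ge g(n)\ge0$ and writing $S_k=\{1,\dots,k\}$, I would bound $\sum_{u,v}Q(u,v)\,|g(u)^2-g(v)^2|$ from above by $\sqrt{2\mathcal{E}(g,g)}\cdot 2\sqrt{\langle g,g\rangle_\pi}$ using Cauchy--Schwarz on $|g(u)^2-g(v)^2|=|g(u)-g(v)|\cdot|g(u)+g(v)|$ together with $\sum_v Q(u,v)=\pi(u)$ and $(a+b)^2\le2(a^2+b^2)$, and from below by $2\Phi_\star\langle g,g\rangle_\pi$ by telescoping $g(u)^2-g(v)^2$ over consecutive indices, swapping the order of summation to get $2\sum_k(g(k)^2-g(k+1)^2)\,Q(S_k,V\setminus S_k)$, using $Q(S_k,V\setminus S_k)\ge\Phi_\star\,\pi(S_k)$ (legitimate because $g(k)>0$ forces $S_k\subseteq\mathrm{supp}\,g$ and hence $\pi(S_k)\le\tfrac12$, while the coefficient vanishes when $g(k)=0$), and noting $\sum_k(g(k)^2-g(k+1)^2)\pi(S_k)=\langle g,g\rangle_\pi$ by Abel summation. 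Comparing the two bounds gives the estimate, and chaining $\Phi_\star^2/2\le\mathcal{E}(g,g)/\langle g,g\rangle_\pi\le1-\lambda_2$ finishes the proof.

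The main obstacle is the co-area estimate: the level-set decomposition and Abel summation have to be lined up precisely with the Cauchy--Schwarz split, and one must invoke the support condition $\pi(\mathrm{supp}\,g)\le\tfrac12$ exactly at the point where $Q(S_k,V\setminus S_k)\ge\Phi_\star\,\pi(S_k)$ is used. Laziness itself plays only a minor role here — it guarantees $\lambda_2\ge0$ and that $\lambda_2$, rather than a more negative eigenvalue, is the quantity governing the gap — so the bulk of the argument is insensitive to it.
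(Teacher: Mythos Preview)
Your proposal reproduces the standard proof of the discrete Cheeger inequality and is correct. However, the paper does not actually prove this lemma: it is stated with attribution to \cite{levin2017markov}, Theorem~13.3, and used as a black box to derive \cref{cor:local_search_and_spectral_gap_sharp_power_law}. So the paper's ``proof'' is simply the citation, and your detailed argument, while sound, goes well beyond what the paper does.
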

	
	We can use this to get a bound on local search in terms of spectral gap via the following lemma from \cite{BCR23}.
	\begin{lemma}[\cite{BCR23}, corollary 2] \label{lem:local_search_and_expansion}
		The randomized query complexity of local search is in
		\begin{align}
			\Omega\left( \frac{\beta \sqrt{n}}{d_{max} \log^2(n)} \right)
		\end{align}
	\end{lemma}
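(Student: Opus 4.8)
\cref{lem:local_search_and_expansion} is a verbatim restatement of Corollary~2 of~\cite{BCR23}, so the plan is to invoke it directly; what follows sketches the argument behind it and explains why it is quoted here rather than re-derived from \cref{thm:lower_bound_in_terms_of_mixing_time}.

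The bound of~\cite{BCR23} underlying this corollary is a congestion-based instantiation of the relational-adversary method (\cref{thm:relational_adversary}): rather than inducing the hard staircase functions from a Markov-chain walk, one induces them from a uniformly random member of a fixed family of source-rooted $G$-paths, and running the $M(\cZ)/q(\cZ)$ estimate with the shared-prefix relation produces a lower bound roughly of the form $\Omega\!\bigl(n^{3/2}/(d_{max}\cdot g)\bigr)$, where the $d_{max}$ plays the role that $\exp(3\sigma)$ plays in \cref{thm:lower_bound_in_terms_of_mixing_time} and $g$ is the \emph{integral edge congestion} of an optimal routing of one unit of flow between every ordered pair of vertices (we refer to~\cite{BCR23} for the precise statement). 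So the task reduces to bounding $g$ in terms of $\beta$.

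A graph with edge expansion $\beta$ has sparsest cut $\Omega(\beta)$, so the Leighton--Rao flow/cut duality routes the all-pairs unit demand \emph{fractionally} with congestion $O(n\log n/\beta)$, and the rounding result~\cite{chuzhoy2016routing}, Corollary~C.2, converts it into an \emph{integral} routing at the cost of a further $\log n$ factor, giving $g = O(n\log^2 n/\beta)$; substituting into the previous bound yields $\Omega\!\bigl(\beta\sqrt n/(d_{max}\log^2 n)\bigr)$. I expect the only delicate points to be bookkeeping --- reconciling the edge- versus volume-normalized notions of cut and the exact constants, and the $\log n$ loss in integralizing the flow --- all of which are carried out in~\cite{BCR23}, so we simply cite the corollary.

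Finally, it is worth noting why this lemma is quoted rather than re-obtained from \cref{thm:lower_bound_in_terms_of_mixing_time}: applied to the max-degree lazy walk (step along each edge with probability $1/(2d_{max})$, else stay put), which has uniform stationary distribution, hence $\sigma = 1$ and bottleneck ratio $\Phi_\star = \beta/(2d_{max})$, the main theorem together with \cref{lem:mixing_time_and_bottleneck_ratio} and \cref{eq:mixing_time_and_spectral_gap} gives only $\Omega\!\bigl(\beta^2\sqrt n/(d_{max}^2\log n)\bigr)$ --- better than \cref{lem:local_search_and_expansion} by a $\log n$ factor but worse by a factor $\beta/d_{max}$, so the two are incomparable --- which is exactly why the expansion-based bound of~\cite{BCR23} is retained here and contrasted with the $\log n$-factor improvement of \cref{cor:local_search_and_spectral_gap_bounded_sigma}.
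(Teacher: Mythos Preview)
Your proposal is correct and matches the paper's treatment: the paper does not prove \cref{lem:local_search_and_expansion} at all but simply quotes it as Corollary~2 of~\cite{BCR23}, exactly as you propose. Your additional sketch of the congestion-to-expansion argument (via Leighton--Rao and the \cite{chuzhoy2016routing} rounding) and your remark that the main theorem applied to the max-degree walk yields the incomparable bound $\Omega(\beta^2\sqrt{n}/(d_{max}^2\log n))$ are both accurate and go beyond what the paper provides, but they are consistent with the paper's surrounding discussion (which explicitly attributes the extra $\log n$ to the flow-rounding step).
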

	
	As a special case, consider the simple lazy random walk on a graph with $d_{max}/d_{min} \le C$ for constant $C$.
	Applying \cref{lem:local_search_and_expansion} to this walk yields the following result:
	
	\bcrcor*
	
	
	\begin{proof}		
		For
		the simple lazy random walk we have $\pi(u) = d(u)/(2|E|)$ for all $u \in V$.
		Let $\cP$ be the transition matrix of the simple lazy random walk.
		Then
		\begin{align}
			\Phi_{\star} &= \min_{S \subseteq V \;\mid\; \pi(S) \le 1/2} \frac{\sum_{(u,v) \in E(S,S^c)} \pi(u) \cP_{u,v}}{\pi(S)} \explain{By definition of $\Phi_{\star}$.}\\
			&= \min_{S \subseteq V \;\mid\; \pi(S) \le 1/2} \frac{\sum_{(u,v) \in E(S,S^c)} (d(u)/(2|E|)) (1/d(u)))}{\sum_{u \in S} (d(u)/(2|E|))} \explain{By definition of $\cP$.} \\
			&= \min_{S \subseteq V \;\mid\; \pi(S) \le 1/2} \frac{|E(S,S^c)|}{\sum_{u \in S} d(u)} \,. \label{eq:phi_beta_1}
		\end{align}
		To get a term of $\beta$, we need to change the scope of $S$ from 
		\[ \{S \subseteq V \mid \pi(S) \le 1/2\}
		\] 
		to 
		\[ 
		\{S \subseteq V \mid |S| \le n/2\}\,.
		\]
		Let $S^*$ be a minimizing choice of $S$ from $\{S \subseteq V \mid |S| \le n/2\}$ for $|E(S,S^c)|/\sum_{u \in S} d(u)$. Then either $S^*$ or $S^{*c}$ (or both) will be in $\{S \subseteq V \mid \pi(S) \le 1/2\}$. We analyze these two cases separately.
		
		\paragraph{Case 1: $\pi(S^*) \le 1/2$.}
		Continuing from \cref{eq:phi_beta_1}, this gives us
		\begin{align}
			\Phi_{\star} &\le \frac{|E(S^*, S^{*c})|}{\sum_{u \in S^*} d(u)} \explain{Since $\pi(S^*) \le 1/2$.} \\
			&= \min_{S \subseteq V \;\mid\; |S| \le n/2} \frac{|E(S,S^c)|}{\sum_{u \in S} d(u)} \,. \explain{By definition of $S^*$.} \\ \label{eq:phi_beta_case_1}
		\end{align}
		
		\paragraph{Case 2: $\pi(S^{*c}) \le 1/2$.}
		Continuing from \cref{eq:phi_beta_1}, this gives us
		\begin{align}
			\Phi_{\star} &\le \frac{|E(S^*, S^{*c})|}{\sum_{u \in S^{*c}} d(u)} \explain{Since $\pi(S^{*c}) \le 1/2$.}\\
			&\le \frac{|E(S^*, S^{*c})|}{\sum_{u \in S^*} d(u)} \cdot C \explain{Since $|S^*| \le n/2 \le |S^{*c}|$ and by definition of $C$.}\\
			&= \min_{S \subseteq V \;\mid\; |S| \le n/2} \frac{|E(S,S^c)|}{\sum_{u \in S} d(u)} \cdot C \,. \explain{By definition of $S^*$.} \\ \label{eq:phi_beta_case_2}
		\end{align}
		
		In both cases, we have
		\begin{align}
			\Phi_{\star}
			&\le \min_{S \subseteq V \;\mid\; |S| \le n/2} \frac{|E(S,S^c)|}{\sum_{u \in S} d(u)} \cdot C \explain{By \cref{eq:phi_beta_case_1} and \cref{eq:phi_beta_case_2}} \\
			&\le \min_{S \subseteq V \;\mid\; |S| \le n/2} \frac{|E(S,S^c)|}{|S| \cdot d_{min}} \cdot C \explain{By definition of $d_{min}$.} \\
			&= \frac{\beta \cdot C}{d_{min}} \explain{By definition of $\beta$.}\\
			&\le \frac{\beta \cdot C^2}{d_{max}}\,. \explain{Since $d_{max}/d_{min} \le C$.} \\ \label{eq:phi_beta}
		\end{align}
		
		Substituting \cref{eq:phi_beta} into \cref{lem:mixing_time_and_bottleneck_ratio}, we get
		\begin{align} \label{eq:beta_and_spectral_gap}
			1 - \lambda_2 \le \frac{2\beta \cdot C^2}{d_{max}} \,. 
		\end{align}
		Substituting \cref{eq:beta_and_spectral_gap} into \cref{lem:local_search_and_expansion}, we get a lower bound of $  \Omega\bigl( \frac{(1 - \lambda_2) \sqrt{n}}{\log^2(n)} \bigr)$ on  the randomized query complexity of local search on $G$.
		This completes the proof of the corollary.
	\end{proof}
	
	Compare to the following corollary, which is just \cref{cor:local_search_and_spectral_gap} applied to the simple lazy random walk when $d_{max}/d_{min}$ is bounded by a constant $C$.
	
	\highspectralgapimplieslocalsearchhardboundedsigma*
	\begin{proof}
		Since $d_{max}/d_{min} \le C$, we have that $\exp(3\sigma) \le \exp(3C)$, which is a constant.
		Therefore \cref{cor:local_search_and_spectral_gap} directly gives that the randomized query complexity of local search is
		\begin{align}
			\Omega\left( \frac{(1-\lambda_2) \sqrt{n}}{\log(n) \exp(3\sigma)} \right) = \Omega\left( \frac{(1-\lambda_2) \sqrt{n}}{\log(n)} \right)\,.
		\end{align}
	\end{proof}
	
	\cref{cor:local_search_and_spectral_gap_bounded_sigma} is stronger than \cref{cor:local_search_and_spectral_gap_sharp_power_law} by a factor of $\log(n)$.
	This represents an improvement of this paper in bounding the difficulty of local search in terms of spectral gap.

\end{document}